\title{Asymptotically Optimal Regenerating Codes Over Any Field}
\author{Netanel Raviv}
\affil{\normalsize Computer Science Department, \\Technion -- Israel Institute of Technology,\\Haifa 3200003, Israel}
\date{}
\renewcommand{\thefootnote}{\arabic{footnote}}
\DeclarePairedDelimiter{\ceil}{\lceil}{\rceil}
\DeclarePairedDelimiter{\subm}{\llbracket}{\rrbracket}
\newtheorem{theorem}{Theorem}
\newtheorem{definition}{Definition}
\newtheorem{lemma}{Lemma}
\newtheorem{remark}{Remark}
\newcommand{\cl}[1]{\mathcal{#1}}
\newcommand{\bF}{\mathbb{F}}
\newcommand{\bZ}{\mathbb{Z}}
\newcommand{\cC}{\mathcal{C}}
\newcommand{\cM}{\mathcal{M}}
\newcommand{\cN}{\mathcal{N}}
\newcommand{\cR}{\mathcal{R}}
\newcommand{\cU}{\mathcal{U}}
\newcommand{\cV}{\mathcal{V}}
\newcommand{\grsmn}[3]{\cl{G}_{#1}\left(#2,#3\right)}
\DeclareMathOperator{\rank}{rank}
\DeclareMathOperator{\image}{Im}
\DeclareMathOperator{\vect}{vec}
\DeclareMathOperator{\ord}{ord}
\def\namedlabel#1#2{\begingroup
	\def\@currentlabel{#2}%
	\label{#1}\endgroup
}
\newcommand\blfootnote[1]{%
	\begingroup
	\renewcommand\thefootnote{}\footnote{#1}%
	\addtocounter{footnote}{-1}
	\endgroup
}
\begin{document}
\maketitle

\begin{abstract}
The study of regenerating codes has advanced tremendously in recent years. However, most known constructions require large field size, and hence may be hard to implement in practice. By using notions from the theory of extension fields, we obtain two explicit constructions of regenerating codes. These codes approach the cut-set bound as the reconstruction degree increases, and may be realized over any given field if the file size is large enough. Since distributed storage systems are the main purpose of regenerating codes, this file size restriction is trivially satisfied in most conceivable scenarios. The first construction attains the cut-set bound at the MBR point asymptotically for all parameters, whereas the second one attains the cut-set bound at the MSR point asymptotically for low-rate parameters. 
\blfootnote{This work was done while Netanel Raviv was a visiting student at the University of Toronto, under the supervision of Prof.~Frank Kschischang. It is a part of his Ph.D. thesis performed at the Technion, under the supervision of Prof.~Tuvi Etzion. \emph{e-mail}: \texttt{netanel.raviv@gmail.com}.}
\end{abstract}

\section{Introduction}\label{section:Introduction}
Since the emergence of cloud storage platforms, distributed storage systems are ubiquitous. As classic erasure correction codes fail to scale with the exponential growth of data, regenerating codes were proposed~\cite{CutSetBound}. 

A regenerating code is described by the parameters $(n,k,d,B,q,\alpha,\beta)$, where $k\le d\le n-1$ and $\beta \le \alpha$. The file $x\in\bF_q^B$ is to be stored on $n$ \textit{storage nodes}. The \textit{reconstruction degree}~$k$ is the number of nodes required to restore~$x$, a process which is called \textit{reconstruction}, and carried out by a \textit{data collector}. The \textit{repair degree}~$d$ is the number of \textit{helper nodes} which are required to restore a lost node, a process which is called \textit{repair}, and carried out by a \textit{newcomer node} (abbrv. newcomer). The parameter $\alpha$ denotes the number of field elements per storage node, and the parameter~$\beta$ denotes the number of field elements which are to be downloaded from each helper node during repair. Further requirements are the ability to reconstruct from \textit{any} set of~$k$ nodes, and to repair from \textit{any} set of~$d$ nodes. 

 In~\cite{CutSetBound}, the parameters of any regenerating code were shown to satisfy the so called \textit{cut-set} bound

\begin{equation}\label{eqn:cutset}
B\le \sum_{i=0}^{k-1}\min \{\alpha,(d-i)\beta\},
\end{equation}

from which a tradeoff between $\alpha$ and $\beta$ is apparent. One point of this tradeoff, in which $\alpha$ is minimized, attains $\alpha=\frac{B}{k}$, whereas the second point, in which $\beta$ is minimized, attains $\alpha=\beta d$. Codes which attain~\eqref{eqn:cutset} with equality and have $\alpha=\frac{B}{k}$ are called Minimum Storage Regenerating (MSR) codes. Codes which attain~\eqref{eqn:cutset} with equality and have $\alpha=d\beta$ are called Minimum Bandwidth Regenerating (MBR) codes.

In the first part of this paper, regenerating codes with $\alpha=d\beta$ are constructed. These codes have~$B$ that asymptotically attains~\eqref{eqn:cutset} with equality as~$k$ increases, and is close to attaining equality even for small values of~$k$. In addition, as long as the file size is large enough, these codes may be realized over any given field, and in particular, the binary field. This restriction on the file size is usually satisfied in typical distributed storage systems. The second part of the paper contains a construction of regenerating codes with $d\ge 2k-2$ that have~$B$ which approaches~$\alpha k$ as~$k$ increases. As in the first part, these codes may also be realized over any given field if the file size is large enough. 

Conceptually, this construction serves both as a mathematical proof of concept that almost optimal regenerating codes exist over any field, and as a means to reduce the complexity of the involved encoding, reconstruction, and repair algorithms by using smaller finite field arithmetics. Note that by employing the structure of an extension field as a vector space over its base field, one may implement any code over an extension field by operations over the base field. However, this approach requires implementation of sophisticated circuits for multiplication over the extension field, while employing the base field itself enables implementation using matrix multiplication only.

\ifdefined\subspaces
The last part presents an alternative approach for the construction in the first part. This approach employs newly studied algebraic objects called \textit{subspace codes}, which recently gained increasing attention due to their application in error correction for random network coding~\cite{KK}. This approach does not result in an improvement of the first part of the paper, and yet it presents applications of mathematical objects of independent interest, and a potential for future improvements of the results.
\fi

Our techniques are inspired by the code construction in~\cite{PMcodes}, which attains MBR codes for all parameters $n,k$, and~$d$, as well as MSR codes for all parameters $n,k$, and~$d$ such that $d\ge 2k-2$, where both constructions have $\beta=1$. It is noted in~\cite[Sec.~I.C]{PMcodes} that only the case $\beta=1$ is discussed since \textit{striping of data} is possible, and larger $\beta$ may be obtained by code concatenation. It will be shown in the sequel that allowing a larger $\beta$, \textit{not} through concatenation, enables a significant reduction in field size with a small and often negligible loss of code rate.

According to~\cite[Sec.~I.B.]{PMcodes}, regenerating codes which do not attain~\eqref{eqn:cutset} with equality are not MBR codes, even if they satisfy $\alpha=d\beta$ and attain~\eqref{eqn:cutset} asymptotically. Similarly, regenerating codes which attain $B=\alpha k$ asymptotically are not considered MSR codes. To the best of our knowledge, such codes were not previously studied, and hence, we coin the following terms.

\begin{definition}\label{definition:NMBR-NMSR}
	A regenerating code is called a \emph{nearly MBR} (NMBR) code if it satisfies~$\alpha=\beta d$, and~$B$ approaches the cut-set bound~\eqref{eqn:cutset} as~$k$ increases. Similarly, a regenerating code is called a \emph{nearly MSR} (NMSR) code if~$B$ approaches~$\alpha k$ as~$k$ increases.
\end{definition}

This paper is organized as follows. Previous work is discussed in Section~\ref{section:PreviousWork}. Mathematical background on several notions from field theory, number theory, and matrix analysis is given in Section~\ref{section:Preliminaries}. NMBR codes are given in Section~\ref{section:MBR}, and NMSR codes in Section~\ref{section:MSR}, each of which contains a subsection with a detailed asymptotic analysis and numerical examples. 
\ifdefined\subspaces
The aforementioned connection to subspace codes is given in Section~\ref{section:SubspaceCodes}, and concluding
remarks with future research directions are given in Section~\ref{section:Discussion}. Finally, omitted proofs and additional properties of the given codes are given in the Appendices.
\else
Finally, concluding remarks with future research directions are given in Section~\ref{section:Discussion}.
\fi

\section{Previous Work}\label{section:PreviousWork}
MBR codes for all parameters $n,k$, and~$d$ were constructed in~\cite{PMcodes}, where the underlying field size $q$ must be at least $n$, and $B={k+1 \choose 2}+k(d-k)$. MSR codes for all parameters $n,k$, and~$d$ such that $d\ge 2k-2$ were also constructed in~\cite{PMcodes}, where the underlying field size must be at least~$n(d-k+1)$, and $B=(d-k+1)(d-k+2)$. These codes are given under a powerful framework called \textit{product matrix codes}, and are the main objects of comparison in this paper. Henceforth, these codes are denoted by PM-MBR and PM-MSR, respectively.

Broadly speaking, the construction of PM-MBR codes associates a distinct field element~$\gamma$ to each storage node, which stores $(1,\gamma,\gamma^2,\ldots,\gamma^d)\cdot M$, where $M$ is a symmetric matrix that contains the file~$x$. In our NMBR construction we replace the vector $(1,\gamma,\gamma^2,\ldots,\gamma^d)$ by a properly chosen matrix. This matrix is associated with an element of an \textit{extension field} of the field~$\bF_q$, an approach which enables a reduction in field size. Our NMSR construction uses similar notions, where the proofs are a bit more involved, and require tools from basic number theory and matrix analysis.

Product matrix codes were recently improved in~\cite{BASICcodes}. The improvement is obtained by operating over a ring~$\cR_m$ in which addition and multiplication may be implemented by cyclic shifts and binary additions. The size of $\cR_m$ is $2^m$, where $m$ must not be divisible by $2,\ldots,n-1$~\cite[Th.~10]{BASICcodes}. Using our techniques, it is possible to employ the binary field itself (or any other given field), rather than the aforementioned ring~$\cR_m$, with minor loss of rate. PM-MSR codes were also recently improved in~\cite{MSRq>n}, which reduced the required field size to $q>n$, whenever $q$ is a power of two. This result follows from a special case of our construction (see Remark~\ref{remark:MSRk=b} in Section~\ref{section:MSR}).

A closely related family of MBR codes called \textit{repair-by-transfer} codes is discussed in~\cite{ReplicationMBR}. In repair-by-transfer codes a node which participates in a repair must transfer its data without any additional computations. In~\cite{ReplicationMBR}, the field size required for repair-by-transfer MBR codes is reduced to $O(n)$ instead of $O(n^2)$ in previous constructions~\cite{RBTcodes}. A similar result is obtained by~\cite{NovelRepair}, which also studied the repair and reconstruction complexities. In addition,~\cite{ReplicationMBR} obtain \textit{binary} repair-by-transfer codes for the special cases $k=d=n-2$ and $k+1=d=n-2$. 

Further aspects of regenerating codes where thoroughly studied in recent years~\cite{collective1,collective2,collective3,collective4,collective5,collective6,Us1}. In particular, the problem of constructing high rate MSR codes, i.e., with a constant number of parity nodes, has received a great deal of attention~\cite{SashaAndYe,SashaAndYe2,HighrateMSR1,HighrateMSR2,Us2}. Implementing our techniques for high rate MSR codes is one of our future research directions.

\section{Preliminaries}\label{section:Preliminaries}
This section lists several notions from field theory, linear algebra, number theory and matrix analysis, which are required for the constructions that follow. To this end, the following notations are introduced. For an integer~$m$, the notation~$\bZ_m$ stands for the ring of integers modulo~$m$, and~$[m]\triangleq\{1,\ldots,m\}$. The ring of univariate polynomials over~$\bF_q$ is denoted by~$\bF_q[x]$. For integers~$s$ and~$t$, the notations~$\bF_{q}^{s\times t}$ stands for the ring of~$s\times t$ matrices over~$\bF_q$. For a matrix $A$, let $A_{i,j}$ be its $(i,j)$-th entry, and let~$A_i$ be its~$i$-th row or column, where ambiguity is resolved if unclear from context. If~$A$ is a matrix in $\bF_q^{ms\times mt}$ which consists of~$s\cdot t$ blocks of size $m\times m$ each, we denote its $(i,j)$-th block by~$\llbracket A\rrbracket ^{(m)}_{i,j}$, and omit the notation~$(m)$ if it is clear from the context. The notations $I_m$ and $\bold{0}_m$ are used to denote the identity and zero matrix of order~$m$, respectively. 

\subsection{Companion matrices and representation of extension fields}
\begin{definition}\label{definition:CompanionMatrix}
	The companion matrix of a monic univariate polynomial $P(x)=p_0+p_1x+\ldots+p_{e-1}x^{e-1}+x^e\in \bF_q[x]$ is the~$e\times e$ matrix
	\begin{align*}
	\begin{pmatrix}
	0 & 0 & \cdots & -p_0 \\
	1 & 0 & \cdots & -p_1 \\
	0 & \ddots & \ddots & \vdots \\
	0 & \cdots & 1& -p_{e-1} 
	\end{pmatrix}.
	\end{align*}
\end{definition}

It is an easy exercise to show that the minimal and characteristic polynomials of a companion matrix are its corresponding polynomial, and the eigenvalues are the roots of that polynomial (which may reside in an extension field of the field of coefficients).

The following lemma, which is well-known, provides a convenient yet redundant representation of extension fields as matrices over the base field. Unlike other representations, this representation encapsulates both the additive and the multiplicative operations in the extension field, both as the respective operations between matrices.

\begin{lemma}\cite[Ch.~2, Sec.~5]{LidlAndNiederreiter}\label{lemma:ExtensionFieldRepresentation}
	If $P\in\bF_{q}[x]$ is monic and irreducible of degree~$m$ with companion matrix~$M_P$, then the linear span over~$\bF_q$ of the set~$\{M_P^i\}_{i=0}^{m-1}$ is isomorphic to~$\bF_{q^m}$. If~$P$ is also primitive, then~$\{M_P^i\}_{i=0}^{q^m-2}\cup\{0\}$ is isomorphic to~$\bF_{q^m}$.
\end{lemma}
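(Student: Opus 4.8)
The plan is to realize the $\bF_q$-span of $\{M_P^i\}_{i=0}^{m-1}$ as the image of an evaluation homomorphism and then invoke the first isomorphism theorem. Concretely, I would consider the map $\varphi\colon\bF_q[x]\to\bF_q^{m\times m}$ defined by $\varphi(f)=f(M_P)$. Since $\bF_q^{m\times m}$ is an associative $\bF_q$-algebra and the scalar matrices are central, $\varphi$ is a homomorphism of $\bF_q$-algebras. Its kernel is the ideal of polynomials annihilating $M_P$, i.e.\ the ideal generated by the minimal polynomial of $M_P$; as recalled just before the lemma, for a companion matrix this minimal polynomial is exactly $P$, so $\ker\varphi=(P)$. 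Because $P$ is irreducible, $(P)$ is maximal, hence $\bF_q[x]/(P)$ is a field with $q^m$ elements and therefore isomorphic to $\bF_{q^m}$; by the first isomorphism theorem $\image\varphi\cong\bF_q[x]/(P)\cong\bF_{q^m}$.

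Next I would identify $\image\varphi$ with the stated span. Dividing an arbitrary $f\in\bF_q[x]$ by $P$ gives $f=gP+r$ with $\deg r<m$, and since $P(M_P)=0$ we get $f(M_P)=r(M_P)$; hence $\image\varphi$ is spanned over $\bF_q$ by $I_m,M_P,\dots,M_P^{m-1}$. These $m$ matrices are moreover linearly independent, since a nontrivial dependence would be a nonzero polynomial of degree $<m$ annihilating $M_P$, contradicting $\ker\varphi=(P)$; this incidentally confirms that the span is genuinely $m$-dimensional, matching $\bF_{q^m}$. Thus the span equals $\image\varphi$ and carries the structure of a field isomorphic to $\bF_{q^m}$, with matrix addition and multiplication as the field operations.

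For the primitive case, suppose in addition that $P$ is primitive. Let $\theta$ be the image of the class of $x$ under the isomorphism $\bF_q[x]/(P)\cong\bF_{q^m}$; then $\theta$ is a root of $P$, and primitivity of $P$ means precisely that $\theta$ generates the cyclic group $\bF_{q^m}^{\ast}$ of order $q^m-1$. Transporting this back along the isomorphism of the first paragraph, $M_P=\varphi(x)$ generates the group of units of the field $\image\varphi$, so the powers $M_P^0,M_P^1,\dots,M_P^{q^m-2}$ are pairwise distinct and exhaust the nonzero elements of $\image\varphi$. Adjoining the zero matrix then recovers all of $\image\varphi$, giving $\{M_P^i\}_{i=0}^{q^m-2}\cup\{0\}=\image\varphi\cong\bF_{q^m}$.

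Regarding difficulty: there is essentially no obstacle here. The only step requiring genuine (if routine) verification is that the minimal — equivalently characteristic — polynomial of the companion matrix $M_P$ is $P$ itself, which is exactly the ``easy exercise'' acknowledged above the lemma; everything else is the first isomorphism theorem together with the Euclidean algorithm in $\bF_q[x]$. The one point I would be careful to state explicitly is the linear independence of $I_m,M_P,\dots,M_P^{m-1}$, so that the word ``span'' in the statement is understood to describe a bona fide $m$-dimensional algebra rather than a coincidentally smaller object.
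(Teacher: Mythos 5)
Your proof is correct and complete. The paper itself does not supply a proof of this lemma — it cites Lidl and Niederreiter directly — and your argument is precisely the standard one a reader would find there: the evaluation map $f \mapsto f(M_P)$ has kernel $(P)$ because $P$ is the minimal polynomial of its own companion matrix, the first isomorphism theorem identifies the image with $\bF_q[x]/(P) \cong \bF_{q^m}$, division with remainder shows the image is the $\bF_q$-span of $I, M_P, \ldots, M_P^{m-1}$, and primitivity transports along the isomorphism to give that $M_P$ generates the unit group. Your explicit note on the linear independence of $\{M_P^i\}_{i=0}^{m-1}$ is a worthwhile clarification that the statement's ``span'' is genuinely $m$-dimensional, which the paper implicitly relies on when it uses $\theta$ as a field isomorphism.
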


Lemma~\ref{lemma:ExtensionFieldRepresentation} also has an inverse~\cite{Wardlaw}. That is, given the field $\bF_{q^m}$, it is possible to represent its elements as all powers of the companion matrix~$P$ which corresponds to an irreducible polynomial of degree~$m$ over~$\bF_q$.
Hence, for any~$m$ and any such matrix~$P$, let~$\theta_P:\bF_{q^m}\to\bF_q^{m\times m}$ be the function which maps an element in the extension field~$\bF_{q^m}$ to its matrix representation in~$\bF_{q}^{m\times m}$ as a linear combination of powers of~$P$, and since our results are oblivious to the choice of~$P$, we denote~$\theta_P$ by~$\theta$. Notice that~$\theta$ is a \textit{field isomorphism}, that is, every~$y_1$ and~$y_2$ in~$\bF_{q^m}$ satisfy that $\theta(y_1\cdot y_2)=\theta(y_1)\cdot\theta(y_2)$ and $\theta(y_1+y_2)=\theta(y_1)+\theta(y_2)$. The function~$\theta$ can be naturally extended to matrices, where~$A\in\bF_{q^m}^{s\times t}$ is mapped to
\begin{align}\label{eqn:Theta}
\Theta(A)\triangleq 
\begin{pmatrix}
\theta(A_{1,1}) & \theta(A_{1,2}) & \cdots & \theta(A_{1,t})\\
\theta(A_{2,1}) & \theta(A_{2,2}) & \cdots & \theta(A_{2,t})\\
\vdots &\vdots &\ddots &\vdots \\
\theta(A_{s,1}) & \theta(A_{s,2}) & \cdots & \theta(A_{s,t})\\
\end{pmatrix}\in\bF_q^{ms\times mt}.
\end{align}

\begin{lemma}\label{lemma:ThetaMult}
	For any integers $m,s,t$ and~$\ell$, if $A\in\bF_{q^m}^{s\times t}$ and $B\in\bF_{q^m}^{t\times \ell}$ then $\Theta(AB)=\Theta(A)\cdot \Theta(B)$.
\end{lemma}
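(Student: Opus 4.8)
The plan is to reduce the identity to the block-wise definition of~$\Theta$ in~\eqref{eqn:Theta}, together with the fact, recalled just above the statement, that~$\theta$ is a field isomorphism. First I would fix indices $i\in[s]$ and $j\in[\ell]$ and compare the $(i,j)$-th $m\times m$ block of the two sides, so that the whole matrix identity follows block by block.

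For the left-hand side, the $(i,j)$-th entry of $AB$ is $(AB)_{i,j}=\sum_{r=1}^{t}A_{i,r}B_{r,j}$, so by the definition of~$\Theta$ the $(i,j)$-th block of $\Theta(AB)$ equals $\theta\bigl(\sum_{r=1}^{t}A_{i,r}B_{r,j}\bigr)$. Since~$\theta$ is both additive and multiplicative, a trivial induction on~$t$ shows that this block equals $\sum_{r=1}^{t}\theta(A_{i,r})\,\theta(B_{r,j})$.

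For the right-hand side, note that $\Theta(A)\in\bF_q^{ms\times mt}$ and $\Theta(B)\in\bF_q^{mt\times m\ell}$ are partitioned into $m\times m$ blocks, and that this partition is conformal for the product $\Theta(A)\cdot\Theta(B)$ precisely because every block is square of the same order~$m$. Hence the product may be computed block-wise, and the $(i,j)$-th block of $\Theta(A)\cdot\Theta(B)$ equals $\sum_{r=1}^{t}\theta(A_{i,r})\,\theta(B_{r,j})$. Comparing the two expressions shows that the $(i,j)$-th blocks of $\Theta(AB)$ and $\Theta(A)\cdot\Theta(B)$ coincide for all~$i$ and~$j$, which proves the claim. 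The only step that is not purely formal is the block-wise multiplication identity, which is the standard fact that a conformal block partition commutes with matrix multiplication; I do not anticipate any genuine obstacle here, the content of the lemma being essentially that $\Theta$ is the map on matrix algebras functorially induced by the ring homomorphism~$\theta$.
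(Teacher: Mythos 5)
Your proof is correct and follows the same route as the paper's: compare the $(i,j)$-th $m\times m$ block of each side, use that $\theta$ is a field isomorphism to push it through the sum $\sum_r A_{i,r}B_{r,j}$, and identify the result with the block-wise product of $\Theta(A)$ and $\Theta(B)$. The only difference is that you make the conformal block-partition step explicit, which the paper leaves implicit.
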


\begin{proof}
	By the definition of~$\Theta$, and by using the fact that~$\theta$ is a field isomorphism, for all $i\in[s]$ and $j\in[\ell]$ we have that
	\begin{align*}
	\llbracket \Theta(AB) \rrbracket^{(m)}_{i,j}&=\theta\left(\sum_{k=1}^{t}A_{i,k}B_{k,j}\right)=\sum_{k=1}^{t}\theta\left(A_{i,k}\right)\theta\left(B_{k,j}\right)\\
	&=\sum_{k=1}^{t}\llbracket \Theta(A)\rrbracket_{i,k}^{(m)}\llbracket \Theta(B)\rrbracket_{k,j}^{(m)}=\llbracket \Theta(A)\cdot \Theta(B) \rrbracket_{i,j}^{(m)}.
	\end{align*}
\end{proof}

\begin{lemma}\label{lemma:ThetaInv}
	For any integers~$m$ and~$t$, if $A\in\bF_{q^m}^{t\times t}$ is invertible then~$\Theta(A)\in\bF_{q}^{mt\times mt}$ is invertible.
\end{lemma}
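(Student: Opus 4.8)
The plan is to reduce the statement to the multiplicativity property already established in Lemma~\ref{lemma:ThetaMult}, together with the elementary observation that $\Theta$ sends the identity matrix to the identity matrix. Concretely, since $A\in\bF_{q^m}^{t\times t}$ is invertible, there is a matrix $A^{-1}\in\bF_{q^m}^{t\times t}$ with $AA^{-1}=I_t$. Applying $\Theta$ to both sides and invoking Lemma~\ref{lemma:ThetaMult} gives $\Theta(A)\,\Theta(A^{-1})=\Theta(I_t)$. If one shows $\Theta(I_t)=I_{mt}$, then $\Theta(A^{-1})$ is a right inverse of the square matrix $\Theta(A)$ over the field $\bF_q$, hence $\Theta(A)$ is invertible (with inverse $\Theta(A^{-1})$).

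The only thing that needs a (very short) argument is the claim $\Theta(I_t)=I_{mt}$. By the block definition~\eqref{eqn:Theta}, $\Theta(I_t)$ is the block-diagonal matrix whose diagonal blocks are $\theta(1)$ and whose off-diagonal blocks are $\theta(0)$. Since $\theta$ is a field isomorphism onto the $\bF_q$-algebra spanned by $\{M_P^i\}_{i=0}^{m-1}$, it is in particular $\bF_q$-linear, so $\theta(0)=\bold{0}_m$; and it carries the multiplicative identity $1\in\bF_{q^m}$ to the multiplicative identity of that matrix algebra, namely $M_P^0=I_m$. Therefore the diagonal blocks are $I_m$ and the off-diagonal blocks are $\bold{0}_m$, so $\Theta(I_t)=I_{mt}$, as needed.

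I do not anticipate any real obstacle here: the content is entirely carried by Lemma~\ref{lemma:ThetaMult}, and the rest is bookkeeping about how $\Theta$ acts on $I_t$. The only point to be slightly careful about is that $\Theta(A)$ and $\Theta(A^{-1})$ are genuinely square matrices of the same order $mt$ over $\bF_q$, so that a one-sided inverse suffices to conclude invertibility; this is immediate from the shape of~\eqref{eqn:Theta}.
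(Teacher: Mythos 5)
Your argument is exactly the paper's: apply Lemma~\ref{lemma:ThetaMult} to $AA^{-1}=I_t$ to obtain $\Theta(A)\Theta(A^{-1})=\Theta(I_t)=I_{mt}$, and conclude that $\Theta(A^{-1})$ is the inverse of $\Theta(A)$. You spell out why $\Theta(I_t)=I_{mt}$ (which the paper treats as immediate), but the approach and content are the same.
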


\begin{proof}
	According to Lemma~\ref{lemma:ThetaMult}, since $A^{-1}$ exists it follows that
	\begin{align*}
	I_{mt}=\Theta(I_t)=\Theta(A\cdot A^{-1})=\Theta(A)\cdot \Theta(A^{-1}),
	\end{align*}
	and hence~$\Theta(A^{-1})$ is the inverse of~$\Theta(A)$. 
\end{proof}

\subsection{Kronecker products and cyclotomic cosets}
The proofs of the construction of NMSR codes in Subsection~\ref{section:MSRconstruction} are slightly more involved than those given in other sections. The main tools in those proofs are cyclotomic cosets and Kronecker products, which are discussed in this subsection.

\begin{definition}\cite[Sec.~3.7]{HuffmanAndPless}, \cite[Sec.~7.5]{Ronny'sBook}\label{definition:cyclotomic}
	For an integer~$m$, a prime power~$q$ such that~$\gcd(q,m)=1$, and $s\in\bZ_m$, a subset of~$\bZ_m$ of the form~$\{s,sq,sq^2,sq^3,\ldots\}$ is called a $q$-cyclotomic coset modulo~$m$.
\end{definition}

It is well known (e.g.,~\cite{Ronny'sBook,HuffmanAndPless}) that for any~$m$ such that~$\gcd(q,m)=1$, the size of any $q$-cyclotomic coset modulo~$m$ divides the \textit{order} of~$q$ in~$\bZ_m$ (that is, the smallest integer~$t$ such that $q^t=1~(\bmod m)$). 

\begin{definition}\label{definition:Kronecker}
	For a matrix~$A\in\bF_q^{s\times t}$ and a matrix $B\in\bF_q^{n\times m}$, the Kronecker product~$A\otimes B$ is the matrix
	\begin{align*}
	\begin{pmatrix}
	A_{1,1}B & A_{1,2}B & \cdots & A_{1,t}B \\
	A_{2,1}B & A_{2,2}B & \cdots & A_{2,t}B \\
	\vdots   & \vdots   & \ddots & \vdots \\
	A_{s,1}B & A_{s,2}B & \cdots & A_{s,t}B  
	\end{pmatrix}\in\bF_{q}^{sn\times tm}.
	\end{align*}
\end{definition}

The Kronecker product is useful when solving equations in which the unknown variable is a matrix. This application is enabled through an operator called~$\vect$, defined as follows.

\begin{definition}\cite[Def.~1]{Neudecker}
	For a matrix~$A\in\bF_q^{s\times t}$ with columns $A_1,\ldots,A_t$, let
	\begin{align*}
	\vect(A)\triangleq 
	\begin{pmatrix}
	A_{1}\\
	A_{2}\\
	\vdots \\
	A_{t}
	\end{pmatrix}\in \bF_q^{st}.
	\end{align*}
\end{definition}

The following two lemmas present several properties of the Kronecker product and the~$\vect$ operator. These lemmas are well known, and their respective proofs may be found, e.g., in~\cite{Laub,HowAndWhy,Neudecker}. In particular, Lemma~\ref{lemma:Vectorizing} which follows discusses a close variant of the so called \textit{Sylvester equation}~$AX+XB=C$, where $A,B,$ and~$C$ are known matrices, and~$X$ is an unknown matrix. For completeness, full proofs are detailed below.

\begin{lemma}\label{lemma:Vectorizing}
	For an integer~$m$, if $A,X$, and $B$ are $m\times m$ matrices over~$\bF_q$, then $\vect(AXB-X)=\left(B^\top\otimes A-I_{m^2}\right)\cdot\vect(X)$.
\end{lemma}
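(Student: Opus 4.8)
The plan is to reduce the statement to a direct computation of the $i$-th column of $AXB-X$ in terms of $\vect(X)$, using the two standard facts that $\vect$ linearizes column-wise and that $\vect(AXB) = (B^\top \otimes A)\vect(X)$. Since the identity is linear in $X$, it suffices to prove it for $AXB$ alone and then subtract the trivial identity $\vect(X) = I_{m^2}\vect(X)$; distributivity of matrix multiplication over addition then gives the claimed expression $(B^\top \otimes A - I_{m^2})\vect(X)$.

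First I would establish the core identity $\vect(AXB) = (B^\top \otimes A)\vect(X)$. Writing $X = \sum_{j=1}^m X_j e_j^\top$ where $X_j$ is the $j$-th column of $X$ and $e_j$ is the $j$-th standard basis vector, one gets $AXB = \sum_j (AX_j)(e_j^\top B) = \sum_j (AX_j)(B^\top e_j)^\top$, so the $\ell$-th column of $AXB$ is $\sum_j (B^\top)_{j,\ell}\, A X_j = A\left(\sum_j (B^\top)_{j,\ell} X_j\right)$. Stacking these columns for $\ell = 1,\ldots,m$ and recognizing the block structure, the $(\ell,j)$ block of the resulting matrix acting on $\vect(X)$ is exactly $(B^\top)_{j,\ell} A = (B^\top)_{\ell,j}^{\,} $... more carefully, the $(\ell,j)$ block is $(B^\top)_{j,\ell}A$, which is precisely the $(\ell,j)$ block of $B^\top \otimes A$ by Definition~\ref{definition:Kronecker} applied with $B^\top$ in the role of the left factor. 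Hence $\vect(AXB) = (B^\top\otimes A)\vect(X)$.

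Then I would finish by linearity: $\vect(AXB - X) = \vect(AXB) - \vect(X) = (B^\top\otimes A)\vect(X) - I_{m^2}\vect(X) = (B^\top\otimes A - I_{m^2})\vect(X)$, using that $\vect$ is an $\bF_q$-linear map (immediate from its definition as a column-stacking operator) and that $\vect(X) = I_{m^2}\vect(X)$ trivially since $\vect(X)\in\bF_q^{m^2}$.

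The only subtle point — and the step most prone to an indexing slip — is verifying that the block matrix whose $(\ell,j)$ block is $(B^\top)_{j,\ell}A$ coincides with $B^\top\otimes A$ rather than with $(B^\top)^\top\otimes A = B\otimes A$; this hinges on matching the column index $\ell$ of $AXB$ to the row index of the Kronecker block structure and the summation index $j$ to the column index, which is exactly how Definition~\ref{definition:Kronecker} is arranged. Everything else is routine bookkeeping, so I would keep the write-up to the column-wise computation above and the one-line linearity conclusion.
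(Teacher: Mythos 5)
Your approach is essentially the same as the paper's: both compute the $\ell$-th column of $AXB$ as a linear combination $\sum_j B_{j,\ell}\,AX_j$ of the columns of $X$, recognize the resulting block row, and identify it with a block row of $B^\top\otimes A$; your only structural difference is pulling the trivial $-I_{m^2}\vect(X)$ term out by linearity rather than carrying the $-X_i$ through the column computation, which changes nothing of substance.

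There is, however, a genuine indexing slip in your write-up at exactly the spot you flagged as dangerous. The $\ell$-th column of $AXB$ is $AXB_\ell = \sum_j B_{j,\ell}\,AX_j$, and since $B_{j,\ell} = (B^\top)_{\ell,j}$, the $(\ell,j)$ block of the coefficient matrix is $(B^\top)_{\ell,j}A$, not $(B^\top)_{j,\ell}A$ as you wrote. As it stands, your last sentence asserts that $(B^\top)_{j,\ell}A$ is the $(\ell,j)$ block of $B^\top\otimes A$, but by Definition~\ref{definition:Kronecker} that block is $(B^\top)_{\ell,j}A$; these agree only if $B$ is symmetric. You made the same transposition error in both the column formula and the identification, so the two slips cancel and your conclusion $\vect(AXB)=(B^\top\otimes A)\vect(X)$ is correct, but the intermediate claim equating $(B^\top)_{j,\ell}A$ with the $(\ell,j)$ block of $B^\top\otimes A$ is false as written. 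Replace $(B^\top)_{j,\ell}$ by $(B^\top)_{\ell,j}$ (equivalently, work directly with $B_{j,\ell}$ as the paper does) and the argument is clean.
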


\begin{proof}
	Clearly, if $X_1,\ldots,X_m$ are the columns of~$X$ and $B_1,\ldots,B_m$ are the columns of~$B$, then the $i$-th column of $(AXB-X)$ is 
	\begin{align*}
	AXB_i-X_i&=\sum_{j=1}^{m}B_{j,i}(AX)_j-X_i\\
	&= \sum_{j=1}^{m}B_{j,i}AX_j-X_i\\
	&=
	\begin{pmatrix}
	B_{1,i}A & B_{2,i}A & \cdots & B_{i-1,i}A & B_{i,i} A-I_m& B_{i+1,i}A & \ldots & B_{m,i}A
	\end{pmatrix} \cdot
	\begin{pmatrix}
	X_1\\X_2\\ \vdots \\ X_m
	\end{pmatrix},
	\end{align*}
	and hence, according to Definition~\ref{definition:Kronecker}, it follows that
	\begin{align*}
	\vect (AXB-X)&=
	\begin{pmatrix}
	B_{1,1}A-I_m & B_{2,1}A & \ldots & B_{m,1}A\\
	B_{1,2}A & B_{2,2}A-I_m & \ldots & B_{m,2}A\\
	\vdots & \vdots & \ddots &  \vdots\\
	B_{1,m}A & B_{2,m}A &  \ldots & B_{m,m}A-I_m\\ 
	\end{pmatrix}\cdot
	\begin{pmatrix}
	X_1\\X_2\\ \vdots \\ X_m
	\end{pmatrix}\\
	&= \left(B^\top \otimes A -I_{m^2}\right)\cdot \vect(X).
	\end{align*}
\end{proof}

\begin{lemma}\label{lemma:KroneckerEigenvalues}
	If~$A$ and~$B$ are two~$m\times m$ matrices over~$\bF_q$, and~$\bF_{q^\ell}$ is a field which contains all eigenvalues $\lambda_1,\ldots,\lambda_m$ of~$A$ and $\mu_1,\ldots,\mu_m$ of~$B$, then the eigenvalues of~$A\otimes B$ are~$\{\lambda_i\mu_j\vert i,j\in[m]\}$.
\end{lemma}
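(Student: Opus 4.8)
The plan is to reduce to triangular form over the splitting field $\bF_{q^\ell}$. First I would invoke the fact that every square matrix over a field is similar, over any field containing all its eigenvalues, to an upper-triangular matrix (Schur-type triangularization, which over a general field is just the statement that a matrix whose characteristic polynomial splits is triangularizable). So pick invertible $P, Q \in \bF_{q^\ell}^{m\times m}$ with $T_A \triangleq P A P^{-1}$ upper triangular with diagonal entries $\lambda_1,\ldots,\lambda_m$ and $T_B \triangleq Q B Q^{-1}$ upper triangular with diagonal entries $\mu_1,\ldots,\mu_m$.

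Next I would use the standard mixed-product property of the Kronecker product, $(XY)\otimes(ZW) = (X\otimes Z)(Y\otimes W)$, to compute
\begin{align*}
(P\otimes Q)(A\otimes B)(P\otimes Q)^{-1} = (P\otimes Q)(A\otimes B)(P^{-1}\otimes Q^{-1}) = (PAP^{-1})\otimes(QBQ^{-1}) = T_A\otimes T_B.
\end{align*}
Here I am using $(P\otimes Q)^{-1} = P^{-1}\otimes Q^{-1}$, itself an instance of the mixed-product rule. Thus $A\otimes B$ is similar over $\bF_{q^\ell}$ to $T_A\otimes T_B$, so they have the same eigenvalues (characteristic polynomial is a similarity invariant).

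Finally I would observe that $T_A\otimes T_B$ is block upper triangular with blocks $(T_A)_{i,j}T_B$; since each diagonal block $(T_A)_{i,i}T_B = \lambda_i T_B$ is itself upper triangular, the whole matrix $T_A\otimes T_B$ is upper triangular, and its diagonal entries are exactly the products $\lambda_i \mu_j$ for $i,j\in[m]$. The characteristic polynomial of a triangular matrix is the product of $(x - \text{diagonal entry})$ over all diagonal entries, so the eigenvalues of $T_A\otimes T_B$ — hence of $A\otimes B$ — are precisely $\{\lambda_i\mu_j \mid i,j\in[m]\}$ (with the appropriate multiplicities).

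The only real subtlety — hardly an obstacle — is making sure the triangularization step is legitimate over $\bF_{q^\ell}$ rather than over an algebraically closed field: this is fine precisely because $\bF_{q^\ell}$ is assumed to contain all eigenvalues of both $A$ and $B$, so both characteristic polynomials split into linear factors over $\bF_{q^\ell}$, which is exactly the hypothesis needed for triangularizability. Everything else is a routine application of the mixed-product identity and the behavior of characteristic polynomials under similarity and triangularization.
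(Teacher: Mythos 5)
Your proof is correct, but it takes a genuinely different route from the paper. The paper's argument is more hands-on: it picks eigenvectors $v_i$ and $u_j$ in $\bF_{q^\ell}^m$ with $Av_i=\lambda_iv_i$ and $Bu_j=\mu_ju_j$ and then computes directly, via the block structure of $A\otimes B$, that $(A\otimes B)(u_j\otimes v_i)=\lambda_i\mu_j\,(u_j\otimes v_i)$, exhibiting each $\lambda_i\mu_j$ as an eigenvalue with an explicit eigenvector. You instead triangularize $A$ and $B$ over $\bF_{q^\ell}$, use the mixed-product rule to show $A\otimes B$ is similar to $T_A\otimes T_B$, and read the eigenvalues off the diagonal of the resulting upper-triangular matrix. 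The eigenvector computation is more elementary (it needs only the block form of the Kronecker product, which is the only Kronecker fact the paper develops), while your triangularization argument needs Schur-type triangularization and the mixed-product identity as auxiliary facts. On the other hand, your approach is strictly stronger: it gives the full characteristic polynomial of $A\otimes B$, hence establishes both inclusions with multiplicities, whereas exhibiting one eigenvector per pair $(\lambda_i,\mu_j)$ only directly shows $\{\lambda_i\mu_j\}\subseteq\mathrm{spec}(A\otimes B)$ and leaves the reverse inclusion (which is actually the direction used later, in the invertibility argument in the MSR reconstruction proof) to a counting-with-multiplicity argument that is smoothest in the diagonalizable case. Since the matrices arising in the paper's application are powers of a companion matrix of a primitive (hence separable) polynomial and are therefore diagonalizable over the splitting field, the paper's proof suffices there; your version proves the lemma in the generality in which it is stated.
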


\begin{proof}
	For any~$i$ and~$j$ in~$[m]$, let~$v_i$ and~$u_j$ be (column) eigenvectors in~$\bF_{q^\ell}^m$ such that~$Av_i=\lambda_iv_i$ and~$Bu_j=\mu_ju_j$. By Definition~\ref{definition:Kronecker}, it follows that
	\begin{align*}
	\left(A\otimes B\right)\cdot \left(u_j\otimes v_i\right)&
	=\begin{pmatrix}
	B_{1,1}A & B_{1,2}A & \ldots & B_{1,m}A\\
	B_{2,1}A & B_{2,2}A & \ldots & B_{2,m}A\\
	\vdots & \vdots & \ddots &  \vdots\\
	B_{m,1}A & B_{m,2}A & \ldots & B_{m,m}A\\ 
	\end{pmatrix}\cdot 
	\begin{pmatrix}
	u_{j,1}v_i\\
	\vdots \\
	u_{j,n}v_i
	\end{pmatrix}\\
	&=
	\begin{pmatrix}
	u_{j,1}B_{1,1}Av_i + u_{j,2}B_{1,2}Av_i + \ldots + u_{j,n}B_{1,m}Av_i\\
	u_{j,1}B_{2,1}Av_i + u_{j,2}B_{2,2}Av_i + \ldots + u_{j,n}B_{2,m}Av_i\\
	\vdots \\
	u_{j,1}B_{m,1}Av_i + u_{j,2}B_{m,2}Av_i + \ldots + u_{j,n}B_{m,m}Av_i\\ 
	\end{pmatrix}\\
	&=
	\lambda_i\cdot\begin{pmatrix}
	B_{1,1}u_{j,1}v_i + B_{1,2}u_{j,2}v_i + \ldots + B_{1,m}u_{j,n}v_i\\
	B_{2,1}u_{j,1}v_i + B_{2,2}u_{j,2}v_i + \ldots + B_{2,m}u_{j,n}v_i\\
	\vdots \\
	B_{m,1}u_{j,1}v_i + B_{m,2}u_{j,2}v_i + \ldots + B_{m,m}u_{j,n}v_i\\ 
	\end{pmatrix}\\
	&=\lambda_i\cdot\begin{pmatrix}
	(Bu_j)_1v_i \\
	(Bu_j)_2v_i \\
	\vdots\\
	(Bu_j)_mv_i \\
	\end{pmatrix}
	=\lambda_i\left(Bu_j\right)\otimes(v_i)=\lambda_i\mu_j\left(u_j\otimes v_i\right).
	\end{align*}
\end{proof}

The following technical lemma will be required in the application of Lemma~\ref{lemma:Vectorizing}. Although it follows immediately from one of the common equivalent definitions of eigenvalues, a full proof is given.

\begin{lemma}\label{lemma:eigenvalue1}
	If $A$ is an ${m\times m}$ matrix over~$\bF_q$, then $A-I$ is invertible if and only if~$1$ is not an eigenvalue of~$A$.
\end{lemma}

\begin{proof}
	Assume~$A-I$ is invertible. If~$1$ is an eigenvalue of~$A$ then there exists a nonzero vector~$v\in\bF_q^m$ such that $Av=v$, and hence, $(A-I)v=v-v=0$, and hence $\ker(A-I)\ne\{0\}$, which implies that $A-I$ is \textit{not} invertible, a contradiction.
	
	Conversely, assume that~$1$ is not an eigenvalue of $A$. If~$A-I$ is not invertible then there exists a nonzero vector~$v\in\bF_q^m$ such that $(A-I)v=0$, which implies that $Av=v$, and hence~$1$ is an eigenvalue of~$A$, a contradiction.  
\end{proof}

\section{Nearly MBR codes}\label{section:MBR}
For any given $n,k,d,q$, and a sufficiently large file size~$B$, this section presents regenerating codes with $\alpha=d\beta$, and~$B$ which approaches the cut-set bound as~$k$ increases. For any such $n,k,d$ and $q$ let $b$ be an integer such that 

\begin{itemize}
	\item [A1.\namedlabel{itm:A1}{A1}] $b\ge k\log_q n$,
	\item [A2.\namedlabel{itm:A2}{A2}] $k\mid b$,
\end{itemize}
and let $B\triangleq \frac{b(b+1)}{2}+b^2\left(\frac{d}{k}-1\right)$. Notice that Condition~\ref{itm:A1} implies that $B=\Omega \left(kd\log_q(n)^2\right)$. Since usually, the file size $B$ is in the order of magnitude of billions, and the number of nodes is in the order of magnitude of dozens, Condition~\ref{itm:A1} is trivially satisfied in many distributed storage systems (see Subsection~\ref{section:MBRasymptotic} for explicit examples). 

\subsection{Construction}\label{section:MBRconstruction}
Given a file $x\in\bF_q^B$, define the following data matrix, which resembles the corresponding one in~\cite{PMcodes}:

\begin{align}\label{eqn:DataMatrix}
\nonumber X &= 
\begin{pmatrix}
S & T\\
T^\top & 0
\end{pmatrix}\in\bF_q^{\frac{db}{k}\times\frac{db}{k}}\mbox{, where}\\
S&\triangleq\begin{pmatrix}
x_1 & x_2 & x_3 & \ldots & x_b \\
x_2 & x_{b+1} & x_{b+2} & \ldots & x_{2b-1}  \\
\vdots & \vdots & \vdots & \vdots & \vdots\\
x_b & ~ &\cdots&~&x_{\frac{b(b+1)}{2}}
\end{pmatrix}\in\bF_q^{b\times b},
\end{align}
and $T\in\bF_q^{b\times b(d/k-1)}$ contains the remaining $b^2\left(\frac{d}{k}-1\right)$ elements of~$x$ in some arbitrary order. 

Let $P$ be a companion matrix of any primitive polynomial of degree $\frac{b}{k}$ over~$\bF_q$, and let $i_1,\ldots,i_n$ be distinct integers in the range $\{0,\ldots,q^{b/k}-1\}$, which exist by~\ref{itm:A1}. Using $P$ and $i_1,\ldots,i_n$, define the following encoding matrix,

\begin{align}\label{eqn:EncodingMatrix}
\nonumber M = 
\begin{pmatrix}
M_1\\
M_2 \\
\vdots\\
M_n
\end{pmatrix}\in \bF_q^{\frac{nb}{k}\times \frac{db}{k}}\mbox{, where}\\
M_j \triangleq 
\begin{pmatrix}
I & P^{i_j} & P^{2i_j} & \cdots & P^{(d-1)i_j} 
\end{pmatrix}\in\bF_q^{\frac{b}{k}\times \frac{db}{k}},
\end{align}
and store $M_j\cdot X$ in storage node $j$. Notice that by the definition of the matrix~$P$, we have that~${\alpha=\frac{b^2}{k^2}\cdot d}$.

\begin{remark}
	It is possible to replace~$P$ by the companion matrix of an irreducible polynomial which is not necessarily primitive, in which case, let~$\{A_j(P)\}_{j=1}^n$ be distinct nonzero linear combinations of $\{P^i\}_{i=0}^{b/k-1}$, and define $	M_j\triangleq 
	\begin{pmatrix}
	I & A_j(P) & A_j(P)^2 & \cdots & A_j(P)^{d-1} 
	\end{pmatrix}$. However, we choose a primitive polynomial for convenience.
\end{remark}

\begin{remark}\label{remark:MBRk=b}
	For $k=b$ this code is a PM-MBR code~\cite[Sec.~IV]{PMcodes}, and in which case Condition~\ref{itm:A1} implies that $q\ge n$. Therefore, the advantage of our techniques exists only for $b>k$.
\end{remark}

\begin{theorem}\label{theorem:MBRrepair}
	In the above code, exact repair of any failed node may be achieved by downloading $\beta\triangleq\frac{b^2}{k^2}$ field elements from any $d$ of the remaining nodes.
\end{theorem}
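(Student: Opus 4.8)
The plan is to replay the repair procedure of product-matrix MBR codes in ``blocked'' form, reducing the single nontrivial ingredient --- invertibility of a block-Vandermonde matrix --- to the ordinary Vandermonde determinant over $\bF_{q^{b/k}}$ via the isomorphism $\Theta$ of Section~\ref{section:Preliminaries}.

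Concretely, suppose node~$f$ has failed and let $D=\{h_1,\dots,h_d\}$ be any $d$ of the surviving nodes. I would have helper $h\in D$ send the newcomer the matrix $M_hXM_f^\top\in\bF_q^{\frac bk\times\frac bk}$; this is computable by node~$h$ from its stored content $M_hX$ and the publicly known matrix $M_f$, and it consists of exactly $(b/k)^2=b^2/k^2=\beta$ field elements, as the theorem asserts. Vertically concatenating the $d$ received matrices yields $\Psi_DXM_f^\top$, where $\Psi_D\in\bF_q^{\frac{db}{k}\times\frac{db}{k}}$ is the vertical concatenation of $M_{h_1},\dots,M_{h_d}$; that is, $\Psi_D$ is the $d\times d$ block matrix whose $(\ell,r)$-th block of order $b/k$ equals $P^{(r-1)i_{h_\ell}}$. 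Assuming $\Psi_D$ is invertible, the newcomer computes $XM_f^\top=\Psi_D^{-1}\bigl(\Psi_DXM_f^\top\bigr)$. Finally, since $S=S^\top$ by construction the data matrix satisfies $X^\top=X$, hence $\bigl(XM_f^\top\bigr)^\top=M_fX$, which is precisely the content of node~$f$. Thus exact repair is achieved, downloading $\beta=b^2/k^2$ symbols from each of the $d$ helpers.

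It remains to prove that $\Psi_D$ is invertible for every choice of $D$. To this end, let $\xi\in\bF_{q^{b/k}}$ be the primitive element with $\theta(\xi)=P$, put $\omega_\ell\triangleq\xi^{i_{h_\ell}}$ for $\ell\in[d]$, and let $V\in\bF_{q^{b/k}}^{d\times d}$ be the Vandermonde matrix with $V_{\ell,r}=\omega_\ell^{\,r-1}$. Since $\theta$ is a field homomorphism, $\theta(V_{\ell,r})=\theta(\xi)^{(r-1)i_{h_\ell}}=P^{(r-1)i_{h_\ell}}$, so $\Theta(V)=\Psi_D$. The $\omega_\ell$ are pairwise distinct because the powers $P^{i_{h_1}},\dots,P^{i_{h_d}}$ are distinct by the choice of the $i_j$'s (Lemma~\ref{lemma:ExtensionFieldRepresentation}) and $\theta$ is injective; hence $\det V=\prod_{\ell<r}(\omega_r-\omega_\ell)\neq 0$, so $V$ is invertible over $\bF_{q^{b/k}}$. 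By Lemma~\ref{lemma:ThetaInv}, $\Psi_D=\Theta(V)$ is invertible over $\bF_q$, which completes the argument.

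The only genuine obstacle is this invertibility of the block-Vandermonde matrix $\Psi_D$, and the $\Theta$-reduction handles it cleanly once one checks that the $\omega_\ell$ are truly distinct --- which is where Condition~\ref{itm:A1} (ensuring enough distinct powers of $P$ to index the $n$ nodes) enters. The remaining points --- that every helper can form $M_hXM_f^\top$, the count of $\beta=b^2/k^2$ downloaded symbols, and the concluding transpose step exploiting the symmetry of $X$ --- are routine verifications.
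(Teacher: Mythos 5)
Your proposal is correct and follows essentially the same route as the paper: send $M_{j_t} X M_i^\top$ from each helper, stack, invert the block-Vandermonde matrix via $\Theta$ and Lemma~\ref{lemma:ThetaInv}, and finish by transposing using the symmetry of~$X$. You merely spell out in more detail why the underlying Vandermonde matrix over $\bF_{q^{b/k}}$ is invertible (distinctness of the evaluation points $\xi^{i_{h_\ell}}$), which the paper leaves implicit.
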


\begin{proof}
	Assume that node $i$ failed, and $D=\{j_1,\ldots,j_d\}$ is a subset of $[n]$ of size $d$ such that $i\notin D$. To repair node $i$, every node $j_t\in D$ computes $M_{j_t}XM_i^\top$, which is a $\frac{b}{k}\times\frac{b}{k}$ matrix over~$\bF_q$, and sends it to the newcomer. The newcomer obtains
	\begin{align*}
	\begin{pmatrix}
	M_{j_1}XM_i^\top\\M_{j_2}XM_i^\top\\\vdots\\M_{j_d}XM_i^\top
	\end{pmatrix}=
	\begin{pmatrix}
	M_{j_1}\\M_{j_2}\\\vdots\\M_{j_d}
	\end{pmatrix}\cdot X \cdot M_i^\top\triangleq M_DXM_i^\top.
	\end{align*}
	
	According to~\eqref{eqn:EncodingMatrix}, the matrix~$M_D$ is of the form
	\begin{align*}
	M_D = 
	\begin{pmatrix}
	I & P^{i_{j_1}} & P^{2i_{j_1}} & \ldots & P^{(d-1)i_{j_1}}\\
	I & P^{i_{j_2}} & P^{2i_{j_2}} & \ldots & P^{(d-1)i_{j_2}}\\
	\vdots & 	\vdots & 	\vdots & 	\vdots & 	\vdots \\
	I & P^{i_{j_d}} & P^{2i_{j_d}} & \ldots & P^{(d-1)i_{j_d}}
	\end{pmatrix}.
	\end{align*}
	Since $M_D$ can be written as~$\Theta(M_D')$ for some invertible Vandermonde matrix~$M_D'$ in~$\bF_{q^{b/k}}^{d\times d}$, it follows by Lemma~\ref{lemma:ThetaInv} that~$M_D$ is invertible. Thus, the newcomer may multiply from the left by $M_D^{-1}$ and obtain $XM_i^\top$. Since $X$ is a symmetric matrix, exact repair is obtained by transposing.
\end{proof}

\begin{theorem}\label{theorem:MBRreconstruction}
	In the above code, reconstruction may be achieved by downloading $\alpha=\frac{b^2}{k^2}\cdot d$ field elements per node from any $k$ nodes.
\end{theorem}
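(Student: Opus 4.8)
The plan is to mimic the reconstruction argument of PM-MBR codes, adapted to our block-matrix setting. Suppose a data collector contacts nodes $j_1,\ldots,j_k$ and thereby obtains $M_{j_t}X$ for each $t\in[k]$; stacking these gives $M_{\cK}X$, where $M_{\cK}\in\bF_q^{\frac{kb}{k}\times\frac{db}{k}}=\bF_q^{b\times\frac{db}{k}}$ consists of the first $k$ blocks of rows $M_{j_1},\ldots,M_{j_k}$. First I would partition the columns of $M_{\cK}$ according to the block structure of $X=\begin{pmatrix}S & T\\ T^\top & 0\end{pmatrix}$, writing $M_{\cK}=\begin{pmatrix}\Phi & \Delta\end{pmatrix}$ where $\Phi\in\bF_q^{b\times b}$ collects the first $\frac{b}{k}\cdot k=b$ columns (the blocks $I,P^{i_{j_t}},\ldots,P^{(k-1)i_{j_t}}$) and $\Delta\in\bF_q^{b\times b(d/k-1)}$ collects the rest. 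Then $M_{\cK}X=\begin{pmatrix}\Phi S+\Delta T^\top & \Phi T\end{pmatrix}$.

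The key step is to show $\Phi$ is invertible. Exactly as in the proof of Theorem~\ref{theorem:MBRrepair}, $\Phi=\Theta(\Phi')$ where $\Phi'\in\bF_{q^{b/k}}^{k\times k}$ is the Vandermonde matrix with nodes corresponding to the distinct elements $P^{i_{j_1}},\ldots,P^{i_{j_k}}$ of $\bF_{q^{b/k}}$ (distinct because $i_{j_1},\ldots,i_{j_k}$ are distinct integers in $\{0,\ldots,q^{b/k}-1\}$ and $P$ is primitive, so its powers in that range are distinct elements of the field). Since a Vandermonde matrix with distinct nodes is invertible, Lemma~\ref{lemma:ThetaInv} gives that $\Phi$ is invertible over $\bF_q$. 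Now from the second block $\Phi T$ we recover $T=\Phi^{-1}(\Phi T)$, hence also $T^\top$; substituting into the first block, $\Phi S=(\Phi S+\Delta T^\top)-\Delta T^\top$ is known, so $S=\Phi^{-1}(\Phi S)$ is recovered. Since $S$ and $T$ together contain all $B=\frac{b(b+1)}{2}+b^2(\frac{d}{k}-1)$ entries of the file $x$, reconstruction is complete; the download per node is the size of $M_{j_t}X$, namely $\frac{b}{k}\cdot\frac{db}{k}=\frac{b^2}{k^2}d=\alpha$ field elements.

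The main obstacle is the bookkeeping of the block structure: one must be careful that ``the first $k$ blocks of columns of $M_{\cK}$'' really do line up with the $S$ and $T^\top$ block-rows of $X$, i.e.\ that the partition of $M_{\cK}$ into $\Phi$ (size $b\times b$) and $\Delta$ (size $b\times b(d/k-1)$) is compatible with the partition of $X$ into its $b\times b$ and $b(d/k-1)\times b$ horizontal strips — this is immediate from the definitions~\eqref{eqn:DataMatrix} and~\eqref{eqn:EncodingMatrix} but deserves an explicit sentence. A secondary point worth stating is why $\Phi=\Theta(\Phi')$: the $(t,\ell)$-block of $\Phi$ is $P^{(\ell-1)i_{j_t}}=\theta\big((\xi^{i_{j_t}})^{\ell-1}\big)$ where $\xi\in\bF_{q^{b/k}}$ is the primitive element represented by $P$, so $\Phi$ is precisely the $\Theta$-image of the Vandermonde matrix on the nodes $\xi^{i_{j_1}},\ldots,\xi^{i_{j_k}}$. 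Everything else is routine linear algebra over $\bF_q$.
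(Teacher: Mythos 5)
Your proof matches the paper's argument step for step: you partition $M_K$ into the $b\times b$ block $\Phi$ (the paper's $M_K'$) and the remaining block $\Delta$ (the paper's $M_K''$), observe $M_KX=\bigl(\Phi S+\Delta T^\top \;\; \Phi T\bigr)$, argue $\Phi=\Theta(\Phi')$ for a $k\times k$ Vandermonde matrix over $\bF_{q^{b/k}}$ and hence invertible by Lemma~\ref{lemma:ThetaInv}, and then recover $T$ from the right block and $S$ by back-substitution. The only cosmetic difference is that you let the $d=k$ case degenerate (empty $T$ and $\Delta$) rather than treating it separately as the paper does; otherwise the decomposition, key lemma, and recovery order are the same.
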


\begin{proof}
	Let $K=\{j_1,\ldots,j_k\}$ be a subset of $[n]$ of size $k$, and download $M_{j_i}X$ from node $j_i$ for each $j_i\in K$. The data collector thus obtains 
	\begin{align*}
	M_K \cdot X&\triangleq 
	\begin{pmatrix}
	I & P^{i_{j_1}} & P^{2i_{j_1}} & \ldots & P^{(d-1)i_{j_1}}\\
	I & P^{i_{j_2}} & P^{2i_{j_2}} & \ldots & P^{(d-1)i_{j_2}}\\
	\vdots & 	\vdots & 	\vdots & 	\vdots & 	\vdots \\
	I & P^{i_{j_k}} & P^{2i_{j_k}} & \ldots & P^{(d-1)i_{j_k}}
	\end{pmatrix}\cdot 
	\begin{pmatrix}
	S & T \\
	T^\top & 0
	\end{pmatrix} \\
	&\triangleq 
	\begin{pmatrix}
	M_{K}'S+M_{K}''T^\top & M_{K}'T
	\end{pmatrix},
	\end{align*}
	where
	\begin{align*}
	M_{K}' &\triangleq 
	\begin{pmatrix}
	I & P^{i_{j_1}} & P^{2i_{j_1}} & \ldots & P^{(k-1)i_{j_1}}\\
	I & P^{i_{j_2}} & P^{2i_{j_2}} & \ldots & P^{(k-1)i_{j_2}}\\
	\vdots & 	\vdots & 	\vdots & 	\vdots & 	\vdots \\
	I & P^{i_{j_k}} & P^{2i_{j_k}} & \ldots & P^{(k-1)i_{j_k}}
	\end{pmatrix}\mbox{, and} & 
	M_{K}'' &\triangleq 
	\begin{pmatrix}
	P^{ki_{j_1}} & P^{2i_{j_1}} & \ldots & P^{(d-1)i_{j_1}}\\
	P^{ki_{j_2}} & P^{2i_{j_2}} & \ldots & P^{(d-1)i_{j_2}}\\
	\vdots & 	\vdots & 	\vdots & 	\vdots \\
	P^{ki_{j_k}} & P^{2i_{j_k}} & \ldots & P^{(d-1)i_{j_k}}
	\end{pmatrix}.
	\end{align*}
	As in the proof of Theorem~\ref{theorem:MBRrepair}, we have that $M_K'$ is invertible. Hence, if $d>k$, the matrix $T$ may be restored by extracting the~$b^2\left(\frac{d}{k}-1\right)$ rightmost columns of $M_KX$ and multiplying by $(M_K')^{-1}$. Having~$T$, it can be used to reduce $M_K''T^\top$ from the remaining columns of~$M_KX$, and then extracting~$S$ is similar. If $d=k$, then $X=S$, and multiplication by $M_K=M_K'$ suffices for reconstruction.
\end{proof}

By Theorem~\ref{theorem:MBRrepair} and Theorem~\ref{theorem:MBRreconstruction} it is evident that $\alpha=d\beta$, and hence this construction attains minimum bandwidth repair. In Subsection~\ref{section:MBRasymptotic} it will be shown that although the cut-set bound is not attained with equality,~$B$ approaches the cut-set bound~\eqref{eqn:cutset} as~$k$ increases. Moreover, it will be evident that a small and often negligible loss of rate is obtained already for small values of~$k$.

\subsection{The proximity of NMBR codes to MBR codes}\label{section:MBRasymptotic}
In this subsection it is shown that the codes constructed in Subsection~\ref{section:MBRconstruction} do not attain the cut-set bound~\eqref{eqn:cutset}, and hence cannot be considered MBR codes even though they attain $\alpha=d\beta$ (see Definition~\ref{definition:NMBR-NMSR}, and its preceding discussion). However, it is also shown that the cut-set bound is nearly achieved for large enough $k$, together with few specific examples which demonstrate a small loss of rate.

Let $C\triangleq \sum_{i=0}^{k-1}\min\{\alpha,(d-i)\beta\}$, and recall that by~\eqref{eqn:cutset} we have that $B\le C$ for all regenerating codes. Clearly, for codes which attain $\alpha=d\beta$ we have that
\begin{align}
\nonumber C &= \sum_{i=0}^{k-1}\min\{d\beta,(d-i)\beta\}\\
~ &= \beta\sum_{i=0}^{k-1}(d-i)=\beta \left(dk-\frac{k(k-1)}{2}\right).\label{eqn:cutsetMBR}
\end{align}

Hence, for the codes which are presented in Subsection~\ref{section:MBRconstruction} we have that $C=\frac{b^2}{k}\left(d-\frac{k-1}{2}\right)$. It is readily verified that indeed, $C>B$, thus~\eqref{eqn:cutset} is not attained, and hence these are not MBR codes. However, we have that 
\begin{align*}
\frac{B}{C}=\frac{2d-k\left(2-\frac{b+1}{b}\right)}{2d-k+1}=\frac{2-\frac{k}{d}\cdot\frac{b-1}{b}}{2-\frac{k}{d}+\frac{1}{d}}
\end{align*}
and hence the cut-set bound is achieved in the asymptotic regime. That is, since a large $k$ implies a large $b$ (since $k\vert b$ in Condition~\ref{itm:A2}) and a large~$d$ (since $d\ge k$), by following the outline of Subsection~\ref{section:MBRconstruction} and choosing a large enough $k$, one may obtain a code in which $B$ is arbitrarily close to~$C$, regardless of the relation between~$k$ and~$d$. 

In the remainder of this section, a detailed comparison of parameters between the PM-MBR codes and our NMBR codes is given. From these examples it will be evident that the decrease in file size (in comparison with the cut-set bound), and hence the decrease in the code rate, is a small price to pay for a considerable reduction in field size.

The curious reader might suggest that the extension field representation which is given in Lemma~\ref{lemma:ExtensionFieldRepresentation}, can be applied directly to PM-MBR codes over an extension field, obtaining regenerating codes over the respective base field. This intuition is formalized in the following definition. For this definition, recall that PM-MBR codes may be obtained by choosing~$k=b$ in the construction in Subsection~\ref{section:MBRconstruction}.

\begin{definition}\label{definition:EPM-MBR}
	Given a PM-MBR code over an extension field~$\bF_{q^m}$ with an encoding matrix~$M$ and data matrix~$X$, let EPM-MBR be the code over~$\bF_q$ which results from applying the function~$\Theta$ from~\eqref{eqn:Theta} on the encoding matrix~$M$ and multiplying it by a data matrix~$X'$. The data matrix~$X'$ is given by applying~$\theta$ on the upper triangular part of the data matrix~$X$, and completing 
	the lower triangular part to obtain symmetry.
\end{definition}

In order to apply the repair and reconstruction algorithms from Theorem~\ref{theorem:MBRrepair} and Theorem~\ref{theorem:MBRreconstruction} to EPM-MBR codes, the data matrix~$X'$ must be symmetric. Hence, it follows that the only data matrices~$X\in \bF_{q^m}^{d\times d}$ on which EPM-MBR codes maintain their repair and reconstruction capabilities are those in which all diagonal submatrices $\theta(X_{i,i})$ of~$X'$ are symmetric. Since companion matrices are in general not symmetric, this usually induces a further loss of rate. For simplicity, we shall ignore this detail in the comparison which follows, since NMBR codes will be shown to supersede EPM-MBR codes even without this additional rate loss. In the remainder of this section we compare between EPM-MBR codes, NMBR codes, and PM-MBR codes with the concise vector space representation of extension field elements\footnote{I.e., each extension field element is represented by a vector over the base field.}.

In PM-MBR codes the file size $B$ is a function of $k$ and $d$, and in addition, $\beta=1$. Further, all parameters are measured in field elements rather than in bits. Therefore, to achieve a fair comparison, one must concatenate a PM-MBR code to itself in order to obtain the same parameters $n,k,d,\alpha,$ and~$\beta$ when measured in bits, and only then compare the resulting $B$, $q$, and the rate $\frac{B}{\alpha n}$. In addition, since fields of even characteristic are essential for hardware implementation, we restrict our attention to $q=2$ in our codes, and to $q$ which is an integer power of $2$ for PM-MBR codes. Hence, the PM-MBR code is concatenated with itself $\frac{b^2}{\ceil{\log n}k^2}$ times, and considered with $q=2^{\ceil{\log n}}$ (the smallest integer power of two that is at least~$n$), where each element in this field is represented by a vector in $\bF_2^{\ceil{\log n}}$. Similarly, the EPM-MBR code is concatenated with itself $\frac{b^2}{\ceil{\log n}^2k^2}$ times, and considered with the same $q=2^{\ceil{\log n}}$, where each element in this field is represented by a square matrix in $\bF_2^{\ceil{\log n}\times \ceil{\log n}}$.

Notice that MBR codes have $B=\beta \left(dk-k(k-1)/2\right)$ (see~\eqref{eqn:cutsetMBR}), where $B$ is measured in elements over $\bF_q$. Therefore, by setting $\beta=1$, $q=2^{\ceil{\log n}}$, and concatenating a PM-MBR code $\frac{b^2}{\ceil{\log n}k^2}$ times with itself, we have that the number of information bits in the file is $C=\frac{b^2}{k}\left(d-\frac{k-1}{2}\right)$. Similarly, by concatenating an EPM-MBR code $\frac{b^2}{\ceil{\log n}^2k^2}$ times with itself, since each field element is represented by a $\ceil{\log n}\times \ceil{\log n}$ binary matrix that contains~$\ceil{\log n}$ information bits, it follows that the number of information bits in the file is $\beta(dk-k(k-1)/2)\cdot \frac{b^2}{\ceil{\log n}^2k^2} \cdot \ceil{\log n}=\frac{C}{\ceil{\log n}}$. As a result, by fixing any $n,k,$ and $d$ such that $k\le d\le n-1$, we have Table~\ref{table:GeneralParameters}, in which the values of $\beta,\alpha$, and~$B$ are given in bits.

\renewcommand{\arraystretch}{1.35}
\newcommand{\specialcell}[2][c]{%
	\begin{tabular}[#1]{@{}c@{}}#2\end{tabular}}
\begin{table}[h]
	\centering
	\begin{tabular}{|c|c|c|c|}
		\cline{2-4}
		\multicolumn{1}{c}{~} &\multicolumn{1}{|c|}{NMBR} & \specialcell[c]{PM-MBR \\ concatenated $\frac{b^2}{\ceil{\log n}k^2}$ times}&\specialcell[c]{EPM-MBR\\concatenated $\frac{b^2}{\ceil{\log n}^2k^2}$ times}\\ \hhline{-===}
		$q$ & $2$ & $2^{\ceil{\log n}}$&2\\ \hline
		$\beta$ & $\frac{b^2}{k^2}$  & \specialcell[c]{$\frac{b^2}{\ceil{\log n}k^2}$ field elements \\in vector form,\\i.e., $\frac{b^2}{k^2}$ bits.} & \specialcell[c]{$\frac{b^2}{\ceil{\log n}^2k^2}$ field elements\\in matrix form,\\i.e., $\frac{b^2}{k^2}$ bits.}\\ \hline
		$\alpha$& $\frac{b^2}{k^2}\cdot d$  & $\frac{b^2}{k^2}\cdot d$ & $\frac{b^2}{k^2}\cdot d$ \\ \hline
		$B$ & $\frac{b(b+1)}{2}+b^2\cdot\left(\frac{d}{k}-1\right)$ & $\frac{b^2}{k}\left(d-\frac{k-1}{2}\right)$ & $\frac{b^2}{k\ceil{\log n}}\left(d-\frac{k-1}{2}\right)$\\ \hline
		Rate & $\frac{k^2}{dn}\cdot\left(\frac{d}{k}-\frac{1}{2}+\frac{1}{2b}\right)$ & 	$\frac{k^2}{dn}\cdot\left(\frac{d}{k}-\frac{k-1}{2k}\right)$ & $\frac{k^2}{dn\ceil{\log n}}\cdot\left(\frac{d}{k}-\frac{k-1}{2k}\right)$\\ \hline
	\end{tabular}
	\caption{A comparison of parameters between our NMBR codes (Subsection~\ref{section:MBRconstruction}) and the PM-MBR codes~\cite[Sec.~IV]{PMcodes} for general $n,k,d$.}\label{table:GeneralParameters}
\end{table}
\renewcommand{\arraystretch}{1}

Table~\ref{table:SpecificParamters} contains specific examples of the comparison given in Table~\ref{table:GeneralParameters}. The parameter~$b$ is chosen such that $\frac{b^2}{\ceil{\log n}k^2}$ and $\frac{b^2}{\ceil{\log n}^2k^2}$ are integers, and such that the resulting file size is within one of several common use cases. Notice that much smaller values of~$b$ may be chosen, for example, if one wishes to increase concurrency by code concatenation. For convenience, some values are given in either MegaBytes (MB), GigaBytes (GB), or TeraBytes (TB) rather than in bits. 

\begin{table}[h]
	\makebox[\textwidth][c]{
	\centering
	\begin{tabular}{|c|c|c|c|c|c|c|c|c|c|}
		\cline{2-10}
		\multicolumn{1}{c}{~}     & \multicolumn{1}{|c|}{$n$} & $k$ & $d$ & $\alpha$ & $\beta$ & $b$ & $q$ & $B$ & Rate \\\hhline{-=========}
		NMBR &\multirow{3}{*}{30}&\multirow{3}{*}{20}&\multirow{3}{*}{20}&\multirow{3}{*}{$250$MB}&\multirow{3}{*}{$12.5$MB}&\multirow{3}{*}{$10000\cdot k$}&2&$\approx 2.5$GB&$\approx 0.33$\\\cline{1-1}\cline{8-10}
		PM-MBR   &~					&~                 &~			      &~					 &~					   &~				  &32&$2.625$GB&$0.35$\\\cline{1-1}\cline{8-10}
		EPM-MBR   &~					&~                 &~			      &~					 &~					   &~				  &2&$0.525$GB&$0.07$\\ \hline\hline
		NMBR &\multirow{3}{*}{26}&\multirow{3}{*}{22}&\multirow{3}{*}{24}&\multirow{3}{*}{$\approx 841.7$MB}&\multirow{3}{*}{$\approx 35.07$MB}&\multirow{3}{*}{$16750\cdot k$}&2&$\approx 10.03$GB&$\approx 0.4583$\\\cline{1-1}\cline{8-10}
		PM-MBR   &~					&~                 &~			      &~					 &~					   &~				  &$32$&$\approx 10.41$GB&$\approx 0.4759$\\\cline{1-1}\cline{8-10}
		EPM-MBR   &~					&~                 &~			      &~					 &~					   &~				  &$2$&$\approx 2.8$GB&$\approx 0.095$\\\hline\hline
		NMBR &\multirow{3}{*}{260}&\multirow{3}{*}{220}&\multirow{3}{*}{240}&\multirow{3}{*}{$\approx 8.416$GB}&\multirow{3}{*}{$\approx35.06$MB}&\multirow{3}{*}{$16749\cdot k$}&2&$\approx 1.002$TB&$\approx0.4583$\\\cline{1-1}\cline{8-10}
		PM-MBR   &~					&~                 &~			      &~					 &~					   &~				  &$512$&$\approx 1.006$TB&$\approx 0.4601$\\\cline{1-1}\cline{8-10}
		EPM-MBR   &~					&~                 &~			      &~					 &~					   &~				  &$2$&$\approx 0.11$TB&$\approx 0.05$\\\hline\hline
		NMBR &\multirow{3}{*}{2600}&\multirow{3}{*}{2200}&\multirow{3}{*}{2400}&\multirow{3}{*}{$\approx84.18$GB}&\multirow{3}{*}{$\approx35.07$MB}&\multirow{3}{*}{$16752\cdot k$}&2&$\approx100.32$TB&$\approx0.4583$\\\cline{1-1}\cline{8-10}
		PM-MBR   &~					&~                 &~			      &~					 &~					   &~				  &4096&$\approx100.36$TB&$\approx0.4585$\\\cline{1-1}\cline{8-10}
		EPM-MBR   &~					&~                 &~			      &~					 &~					   &~				  &2&$\approx 8.36$TB&$\approx0.038$\\\hline
	\end{tabular}}
	\caption{A comparison of parameters between our NMBR codes (Subsection~\ref{section:MBRconstruction}) and the PM-MBR codes~\cite[Sec.~IV]{PMcodes} for several common parameters $n,k,d$.}\label{table:SpecificParamters}
\end{table}

From Table~\ref{table:SpecificParamters} it is evident that in comparison with PM-MBR codes, a considerable reduction in field size is obtained by our codes, even for rather small values of~$k$. Furthermore, our techniques obtain a larger rate in comparison with EPM-MBR codes, which are implemented over the binary field as well.

In many practical applications~\cite[Slide~38]{Plank}, multiplication in a finite field~$\bF_{2^w}$ is implemented by table look-ups for $w\le 8$, and sometimes considered infeasible in large systems with $w>8$, since it requires numerous table look-ups and expensive arithmetic. Hence, for $n>2^8=256$, our techniques improve the feasibility of storage codes without compromising the code rate significantly. 

\section{Nearly MSR codes}\label{section:MSR}
In this section, for any given $n,k,d,q$ such that $d\le n-1$ and $d=2k-2$, and for a sufficiently large file size~$B$, regenerating codes in which $B$ approaches~$\alpha k$ as~$k$ increases are provided. Codes for $d>2k-2$ with similar properties are obtained in the sequel from this construction. For any such $n,k,d$ and $q$, let $b$ be an integer such that 

\begin{itemize}
	\item [B1.\namedlabel{itm:B1}{B1}] $n\le\frac{q^{b/k}-1}{g\cdot\frac{b}{k}}$, where $g\triangleq \gcd(k-1,q^{b/k}-1)$, 
	\item [B2.\namedlabel{itm:B2}{B2}] $k\mid b$,
\end{itemize}
and let
\begin{align*}
B\triangleq \frac{b(k-1)}{k}\cdot\left(\frac{b(k-1)}{k}+1\right)=\frac{b^2(k-1)}{k}\left(1-\frac{1}{k}+\frac{1}{b}\right).
\end{align*} 

Condition~\ref{itm:B1} implies that $\frac{ng}{k}\le \frac{q^{b/k}-1}{b}$, and thus, since $g\le k-1$, it follows that any integer~$b$ such that $b\ge k(\log_q n+\log_q b)$ suffices. Further, Condition~\ref{itm:B1} implies that
\[
B=\Omega\left(k^2(\log_q(n)+\log_q(b))^2 \right),
\]
and hence it is trivially satisfied in many distributed storage systems.

\subsection{Construction}\label{section:MSRconstruction}
Similar to~\cite{PMcodes}, given a file~$x\in\bF_q^B$, arrange its symbols in the upper triangle of two square matrices $S_1,S_2$ of dimensions $\frac{b(k-1)}{k}\times \frac{b(k-1)}{k}$ over~$\bF_q$, complete the lower triangle of $S_1,S_2$ to obtain symmetry, and define
\begin{align*}
X\triangleq\begin{pmatrix}
S_1 \\ S_2
\end{pmatrix}.
\end{align*}

Next, a set of integers~$i_1,\ldots,i_n$ in the range $\{0,\ldots,\frac{q^{b/k}-1}{g}-1\}$ is chosen such that no two reside in the same $q$-cyclotomic coset modulo~$\frac{q^{b/k}-1}{g}$. This choice is enabled by the following lemma.

\begin{lemma}\label{lemma:SizeOfCosets}
	The size of $q$-cyclotomic cosets modulo~$\frac{q^{b/k}-1}{g}$ is at most~$b/k$.
\end{lemma}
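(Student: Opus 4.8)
The plan is to reduce the statement to the well-known fact, recalled just before Definition~\ref{definition:cyclotomic}, that the size of any $q$-cyclotomic coset modulo an integer $N$ (with $\gcd(q,N)=1$) divides the multiplicative order of $q$ in $\bZ_N$. Writing $e\triangleq b/k$, which is a positive integer by Condition~\ref{itm:B2}, and $N\triangleq\frac{q^{e}-1}{g}$, the whole argument comes down to two easy observations about $N$: that $q$-cyclotomic cosets modulo $N$ are well-defined, and that the order of $q$ in $\bZ_N$ is at most $e$.

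First I would note that $g=\gcd(k-1,q^{e}-1)$ divides $q^{e}-1$, so $N$ is indeed a positive integer and $N\mid q^{e}-1$. Since $\gcd(q,q^{e}-1)=1$, it follows that $\gcd(q,N)=1$, so the notion of a $q$-cyclotomic coset modulo $N$ is meaningful. Next, because $N\mid q^{e}-1$ we have $q^{e}\equiv 1\ (\bmod\ N)$, and hence the order of $q$ in $\bZ_N$ divides $e$; in particular it is at most $e=b/k$.

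Finally I would invoke the cited fact: the size of every $q$-cyclotomic coset modulo $N$ divides the order of $q$ in $\bZ_N$, which we have just bounded by $b/k$. Therefore each such coset has size at most $b/k$, as claimed. There is no real obstacle here — the only point requiring a moment's care is to record that $g\mid q^{e}-1$, so that $N$ is an honest integer dividing $q^{e}-1$; everything else is a direct application of the divisibility $N\mid q^{e}-1$ together with the quoted property of cyclotomic cosets.
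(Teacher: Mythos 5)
Your proof is correct and follows essentially the same route as the paper: you invoke the fact that the size of a $q$-cyclotomic coset modulo $N$ divides $\ord_N(q)$, and then observe that $N=\frac{q^{b/k}-1}{g}$ divides $q^{b/k}-1$, so $q^{b/k}\equiv 1\ (\bmod\ N)$ and hence $\ord_N(q)\le b/k$. The extra remarks about $N$ being an integer and $\gcd(q,N)=1$ are harmless but not a departure from the paper's argument.
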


\begin{proof}
	According to~\cite[Th.~4.1.4, p.~123]{HuffmanAndPless}, for any~$m$, the size of any $q$-cyclotomic coset modulo~$m$ is a divisor of $\ord_m(q)$, where $\ord_m(q)$ is the smallest integer~$t$ such that~$q^t = 1~(\bmod m)$. Since clearly, $\frac{q^{b/k}-1}{g}\vert q^{b/k}-1$, it follows that $q^{b/k}= 1~(\bmod \frac{q^{b/k}-1}{g})$, which implies that $\ord_{(q^{b/k}-1)/g}(q)$ is at most~$b/k$, and the claim follows.
\end{proof}

Lemma~\ref{lemma:SizeOfCosets} implies that there are at least~$\frac{q^{b/k}-1}{g\cdot (b/k)}$ different $q$-cyclotomic cosets modulo~$\frac{q^{b/k}-1}{g}$, which enables the choice of $i_1,\ldots,i_n$ by Condition~\ref{itm:B1}. Notice that the choice of $i_1,\ldots,i_n$ is possible using a simple algorithm, which maintains a list of feasible elements, iteratively picks an arbitrary element as the next $i_j$, and removes its coset from the list.

Let~$P$ be a companion matrix of any primitive polynomial of degree~$\frac{b}{k}$ over~$\bF_q$, and let

\begin{align*}
\Phi &\triangleq
\begin{pmatrix}
I & P^{i_1}   & \cdots & P^{i_1(k-2)}\\
I & P^{i_2} & \cdots & P^{i_2(k-2)}\\
\vdots & \vdots & \vdots & \vdots \\
I & P^{i_n} & \cdots & P^{i_n(k-2)}\\
\end{pmatrix}\in\bF_{q}^{\frac{bn}{k}\times\frac{b(k-1)}{k}} & \Lambda \triangleq \begin{pmatrix}
P^{i_1(k-1)} & ~ & ~ &~\\
~     & P^{i_2(k-1)} & ~ &~\\
~ & ~ & \ddots & \\
~ & ~ & ~ & P^{i_n(k-1)} \\
\end{pmatrix} \in\bF_{q}^{\frac{bn}{k}\times\frac{bn}{k}} .
\end{align*}

Define the~$\frac{bn}{k}\times\frac{bd}{k}$ encoding matrix over~$\bF_q$ as $M\triangleq \begin{pmatrix}\Phi&\Lambda\Phi\end{pmatrix}$ and notice that~$M$ is a block-Vandermonde matrix. Moreover, according to Lemma~\ref{lemma:ExtensionFieldRepresentation}, Lemma~\ref{lemma:ThetaInv}, and the choice of~$i_1,\ldots,i_n$, it follows from the properties of Vandermonde matrices in~$\bF_{q^{b/k}}^{n\times d}$ that any $\frac{bd}{k}\times \frac{bd}{k}$ block submatrix\footnote{That is, a submatrix which consists of complete blocks.} of~$M$ is invertible. Similarly, every $\frac{b(k-1)}{k}\times \frac{b(k-1)}{k}$ block submatrix of~$\Phi$ is invertible. Let $M_i$ be the $i$-th block-row of $M$, and store $M_i\cdot X$ in node~$i$. By the definition of the corresponding matrices, we have that~$\alpha=\frac{b^2(k-1)}{k^2}$.

\begin{remark} \label{remark:MSRk=b}
	For $k=b$ this code is a special case of an PM-MSR code~\cite[Sec.~V]{PMcodes}, and in which case condition~\ref{itm:B1} implies that $q\ge n\cdot\gcd(k-1,q-1)+1$. Hence, the advantage of our techniques exists not only for $b>k$, unlike Remark~\ref{remark:MBRk=b}. This improvement also follows from~\cite[Eq.~(37)]{MSRq>n}.
\end{remark}

\begin{theorem}\label{theorem:MSRrepair}
	In the above code, exact repair of any failed node may be achieved by downloading~$\beta\triangleq \frac{b^2}{k^2}$ field elements from any~$d$ of the remaining nodes.
\end{theorem}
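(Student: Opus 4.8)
The plan is to replicate the product-matrix MSR repair procedure of~\cite{PMcodes}, with $\frac{b}{k}\times\frac{b}{k}$ blocks playing the role of scalars; the two facts that make this go through are the symmetry of $S_1$ and $S_2$ over~$\bF_q$ and the invertibility of block submatrices of the block-Vandermonde matrix~$M$. Concretely, suppose node~$f$ fails and fix a helper set $D=\{h_1,\dots,h_d\}\subseteq[n]\setminus\{f\}$. Write $\phi_i\in\bF_q^{\frac{b}{k}\times\frac{b(k-1)}{k}}$ for the $i$-th block-row of~$\Phi$ and $\lambda_i$ for the $i$-th diagonal block of~$\Lambda$, so that node~$i$ stores $M_iX=\phi_iS_1+\lambda_i\phi_iS_2$. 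The repair protocol I would specify is: each helper $h\in D$ computes and transmits the $\frac{b}{k}\times\frac{b}{k}$ matrix $M_hX\,\phi_f^{\top}$, i.e.\ exactly $\beta=\frac{b^2}{k^2}$ field elements per helper.

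Next I would stack the received blocks. Using $M_hX\phi_f^{\top}=\phi_hS_1\phi_f^{\top}+\lambda_h\phi_hS_2\phi_f^{\top}$ and writing $\Phi_D,\Lambda_D$ for the restrictions of $\Phi,\Lambda$ to the block-rows indexed by~$D$, one has
\[
\begin{pmatrix}M_{h_1}X\phi_f^{\top}\\ \vdots\\ M_{h_d}X\phi_f^{\top}\end{pmatrix}
=\begin{pmatrix}\Phi_D & \Lambda_D\Phi_D\end{pmatrix}\begin{pmatrix}S_1\phi_f^{\top}\\ S_2\phi_f^{\top}\end{pmatrix}
=M_D\begin{pmatrix}S_1\phi_f^{\top}\\ S_2\phi_f^{\top}\end{pmatrix},
\]
where $M_D$ is the $\frac{bd}{k}\times\frac{bd}{k}$ block submatrix of~$M$ on the rows of~$D$. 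As observed in Section~\ref{section:MSRconstruction}, $M=\Theta(V)$ for a nonsingular $n\times d$ Vandermonde matrix $V$ over~$\bF_{q^{b/k}}$ whose $n$ evaluation points are distinct powers of a primitive element (by the choice of $i_1,\dots,i_n$), so by Lemmas~\ref{lemma:ExtensionFieldRepresentation} and~\ref{lemma:ThetaInv} every such $M_D$ is invertible; multiplying by $M_D^{-1}$ recovers $S_1\phi_f^{\top}$ and $S_2\phi_f^{\top}$.

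Finally, since $S_1$ and $S_2$ are symmetric over~$\bF_q$ by construction, $S_\ell\phi_f^{\top}=(\phi_fS_\ell)^{\top}$ for $\ell\in\{1,2\}$, so transposing the two recovered blocks yields $\phi_fS_1$ and $\phi_fS_2$; the newcomer then forms $\phi_fS_1+\lambda_f\phi_fS_2=M_fX$, which is precisely the stored content of node~$f$, completing the exact repair. I expect the only point needing genuine care to be the invertibility of~$M_D$, but this is already available from the construction section, so what remains is essentially bookkeeping: verifying that the transmitted matrix is $\frac{b}{k}\times\frac{b}{k}$ (hence $\beta$ field elements), that the stacking identity holds block-wise, and that the transposition step uses only the symmetry of $S_1,S_2$ — not that of the companion-power blocks, which are not symmetric in general.
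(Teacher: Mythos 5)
Your proposal is correct and follows essentially the same route as the paper's proof: each helper transmits $M_hX\Phi_\ell^\top$, the stacked matrix is $M_D X\Phi_\ell^\top$ with $M_D=\Theta(M_D')$ invertible by Lemma~\ref{lemma:ThetaInv}, and symmetry of $S_1,S_2$ lets the newcomer transpose to get $\Phi_\ell S_1$ and $\Phi_\ell S_2$, then form $M_\ell X$. The only minor slip is calling the $n\times d$ Vandermonde over $\bF_{q^{b/k}}$ ``nonsingular''; what you mean (and what the paper uses) is that every $d\times d$ submatrix of it is invertible.
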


\begin{proof}
	Assume that node~$\ell$ failed and $D=\{j_1,\ldots,j_d\}$ is a subset of~$[n]$ of size~$d$ such that~$\ell\notin D$. Let~$\Phi_\ell$ be the $\ell$-th block-row of~$\Phi$, and notice that node~$\ell$ stored 
	\begin{align}\label{eqn:MSRrepair}
	M_\ell X=\begin{pmatrix}\Phi_\ell& P^{i_\ell(k-1)}\Phi_\ell \end{pmatrix}\cdot X= \Phi_\ell S_1+P^{i_\ell(k-1)}\Phi_\ell S_2.
	\end{align}
	
	To repair node~$\ell$, every node $j_t\in D$ computes $M_{j_t}X\Phi_\ell^\top$, which is a~$\frac{b}{k}\times \frac{b}{k}$ matrix over~$\bF_q$, and sends it to the newcomer. The newcomer obtains
	\begin{align*}
	\begin{pmatrix}
	M_{j_1}X\Phi_\ell^\top\\M_{j_2}X\Phi_\ell^\top\\\vdots\\M_{j_d}X\Phi_\ell^\top
	\end{pmatrix}=
	\begin{pmatrix}
	M_{j_1}\\M_{j_2}\\\vdots\\M_{j_d}
	\end{pmatrix}\cdot X \cdot \Phi_\ell^\top\triangleq M_DX\Phi_\ell^\top.
	\end{align*}

Since $M_D$ can be seen as $\Theta(M'_D)$ for some full rank Vandermonde matrix~$M'_D\in\bF_{q^{b/k}}^{d\times d}$, it follows from Lemma~\ref{lemma:ThetaInv} that $M_D$ is invertible, and hence the newcomer may obtain
\[
\left(X\Phi_\ell^\top\right)^\top=\Phi_\ell\cdot\begin{pmatrix}
S_1 & S_2
\end{pmatrix},
\]
and restore~$M_\ell X$ by~\eqref{eqn:MSRrepair}.
\end{proof}

\begin{theorem}\label{theorem:MSRreconstruction}
	In the above code, reconstruction may be achieved by downloading $\alpha=\frac{b^2(k-1)}{k^2}$ field elements per node from any~$k$ nodes.
\end{theorem}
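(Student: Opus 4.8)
The plan is to follow the reconstruction argument for PM-MSR codes of~\cite[Sec.~V]{PMcodes}, carrying out all linear algebra at the level of $\frac{b}{k}\times\frac{b}{k}$ blocks over~$\bF_q$, and replacing the ``distinct evaluation points'' hypothesis of~\cite{PMcodes} by a statement about $q$-cyclotomic cosets. Let $K=\{j_1,\ldots,j_k\}$, write $b'\triangleq b/k$ for the block size, let $\Phi_K$ be the $kb'\times(k-1)b'$ matrix stacking the block-rows $\Phi_{j_1},\ldots,\Phi_{j_k}$ of~$\Phi$, and let $\Lambda_K$ be the block-diagonal matrix with diagonal blocks $P^{i_{j_1}(k-1)},\ldots,P^{i_{j_k}(k-1)}$. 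Downloading the full content $M_iX$ of every node $i\in K$ --- which, as computed in the construction, amounts to $\alpha=\frac{b^2(k-1)}{k^2}$ elements of $\bF_q$ per node --- the data collector obtains $M_KX=\Phi_KS_1+\Lambda_K\Phi_KS_2$. Right-multiplying by $\Phi_K^\top$ yields the known matrix $\mathsf{P}+\Lambda_K\mathsf{Q}$, where $\mathsf{P}\triangleq\Phi_KS_1\Phi_K^\top$ and $\mathsf{Q}\triangleq\Phi_KS_2\Phi_K^\top$ are symmetric $b\times b$ matrices (since $S_1,S_2$ are symmetric), partitioned into $k\times k$ blocks of size $b'\times b'$; the $(m,m')$ block of $\mathsf{P}+\Lambda_K\mathsf{Q}$ equals $\mathsf{P}_{m,m'}+P^{i_{j_m}(k-1)}\mathsf{Q}_{m,m'}$.

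The first and main step is to \emph{disentangle} $\mathsf{P}$ and $\mathsf{Q}$. Unlike the scalar situation in~\cite{PMcodes}, the individual blocks of $\mathsf{P},\mathsf{Q}$ are not symmetric, only the full matrices are, so $\mathsf{P}_{m',m}=\mathsf{P}_{m,m'}^\top$ and $\mathsf{Q}_{m',m}=\mathsf{Q}_{m,m'}^\top$. Fixing $m\ne m'$, the $(m,m')$ block gives $\mathsf{P}_{m,m'}+P^{i_{j_m}(k-1)}\mathsf{Q}_{m,m'}$ and the transpose of the $(m',m)$ block gives $\mathsf{P}_{m,m'}+\mathsf{Q}_{m,m'}\bigl(P^{i_{j_{m'}}(k-1)}\bigr)^\top$; subtracting eliminates $\mathsf{P}_{m,m'}$ and leaves the Sylvester-type equation
\[
P^{i_{j_m}(k-1)}\,\mathsf{Q}_{m,m'}-\mathsf{Q}_{m,m'}\,\bigl(P^{i_{j_{m'}}(k-1)}\bigr)^\top=E_{m,m'}
\]
with $E_{m,m'}$ known. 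Since $P$ (the companion matrix of a primitive polynomial) is invertible, this equation may be rewritten in the form covered by Lemma~\ref{lemma:Vectorizing}, and together with Lemma~\ref{lemma:KroneckerEigenvalues} and Lemma~\ref{lemma:eigenvalue1} one concludes that it has a unique solution $\mathsf{Q}_{m,m'}$ exactly when $P^{i_{j_m}(k-1)}$ and $\bigl(P^{i_{j_{m'}}(k-1)}\bigr)^\top$ --- equivalently, $P^{i_{j_m}(k-1)}$ and $P^{i_{j_{m'}}(k-1)}$ --- share no eigenvalue. Once $\mathsf{Q}_{m,m'}$ is found, $\mathsf{P}_{m,m'}$ follows; doing this for all $m\ne m'$ recovers all off-diagonal blocks of $\mathsf{P}$ and $\mathsf{Q}$.

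The crux of the proof is verifying this eigenvalue condition, and this is precisely where Condition~\ref{itm:B1} and the cyclotomic-coset choice of $i_1,\ldots,i_n$ are used. Because $P$ is the companion matrix of a primitive polynomial of degree $b'$, its eigenvalues are $\alpha,\alpha^q,\ldots,\alpha^{q^{b'-1}}$ for a primitive element $\alpha$ of $\bF_{q^{b'}}$, so the eigenvalues of $P^{i}$ are indexed by the $q$-cyclotomic coset of $i$ modulo $q^{b'}-1$. Hence $P^{i_{j_m}(k-1)}$ and $P^{i_{j_{m'}}(k-1)}$ share an eigenvalue iff $i_{j_m}(k-1)\equiv i_{j_{m'}}(k-1)q^{u}\pmod{q^{b'}-1}$ for some $u$. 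Writing $k-1=g\cdot h$ and $q^{b'}-1=g\cdot\frac{q^{b'}-1}{g}$ with $\gcd\!\bigl(h,\frac{q^{b'}-1}{g}\bigr)=1$ (by the definition of $g$), dividing this congruence through by $g$ and cancelling $h$ yields $i_{j_m}\equiv i_{j_{m'}}q^{u}\pmod{\frac{q^{b'}-1}{g}}$, i.e.\ $i_{j_m}$ and $i_{j_{m'}}$ lie in the same $q$-cyclotomic coset modulo $\frac{q^{b'}-1}{g}$ --- which contradicts the choice of the $i_j$'s since $j_m\ne j_{m'}$. Thus the Sylvester equation is uniquely solvable.

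It remains to extract $S_1$ and $S_2$ from the recovered off-diagonal blocks, which proceeds exactly as in~\cite{PMcodes}. For each fixed $m$, the blocks $\mathsf{P}_{m,m'}=\Phi_{j_m}S_1\Phi_{j_{m'}}^\top$ over $m'\ne m$ assemble horizontally into $\Phi_{j_m}S_1\bigl(\Phi_{K\setminus\{j_m\}}\bigr)^\top$, where $\Phi_{K\setminus\{j_m\}}$ is a $\frac{b(k-1)}{k}\times\frac{b(k-1)}{k}$ block submatrix of $\Phi$ and is therefore invertible (as established right after the definition of $\Phi$); right-multiplying by the inverse of its transpose yields $\Phi_{j_m}S_1$. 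Stacking these over $m\in[k]$ gives $\Phi_KS_1$, and left-multiplying by the inverse of any $\frac{b(k-1)}{k}\times\frac{b(k-1)}{k}$ block submatrix of $\Phi_K$ recovers $S_1$; the same computation with $\mathsf{Q}$ recovers $S_2$, hence the file $x$. I expect the eigenvalue/cyclotomic-coset step to be the one real obstacle: it is the only place where passing from scalars to blocks forces an argument genuinely beyond the PM-MSR proof, whereas the Sylvester machinery of Section~\ref{section:Preliminaries} and the block-Vandermonde invertibility already recorded for $M$ and $\Phi$ dispatch everything else routinely.
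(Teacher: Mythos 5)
Your proposal is correct and follows essentially the same approach as the paper: you form $\Gamma = \Phi_K S_1 \Phi_K^\top + \Lambda_K \Phi_K S_2 \Phi_K^\top$, use block-transpose subtraction to isolate a Sylvester-type equation in each off-diagonal block of $\mathsf{Q}$, invoke Lemmas~\ref{lemma:Vectorizing}, \ref{lemma:KroneckerEigenvalues}, and~\ref{lemma:eigenvalue1} to reduce solvability to an eigenvalue condition, verify that condition via the cyclotomic-coset choice of $i_1,\ldots,i_n$ together with the $\gcd$-cancellation enabled by Condition~\ref{itm:B1}, and then recover $S_1,S_2$ from the off-diagonal blocks by exploiting invertibility of $(k-1)\times(k-1)$ block submatrices of $\Phi$. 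The only cosmetic difference is notation ($\mathsf{P},\mathsf{Q}$ for the paper's $W,Q$ and $m,m'$ for $s,t$); the decomposition, the use of the key lemmas, the cyclotomic-coset argument, and the final extraction step all coincide with the paper's proof.
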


\begin{proof}
	Let $K=\{j_1,\ldots,j_k\}$ be a subset of~$[n]$ of size~$k$, and download $M_{j_i}X$ from node~$j_i$ for each $j_i\in K$. The data collector obtains
	\begin{align*}
	M_KX=\Phi_K S_1 + \Lambda_K\Phi_K S_2,
	\end{align*}
	where $\Lambda_K$ and $\Phi_K$ are the row-submatrices of~$\Lambda$ and~$\Phi$ which consist of the block-rows which are indexed by~$K$. By multiplying from the right by~$\Phi_K^\top$, the data collector obtains
	\begin{align*}
	\Gamma\triangleq \Phi_KS_1\Phi_K^\top+\Lambda_K\Phi_KS_2\Phi_K^\top\triangleq W+\Lambda_K Q,
	\end{align*}
	where $W$ and $Q$ are symmetric matrices. For~$s\in[k]$ denote~$i_{j_s}$ by~$\ell_s$, and notice that for distinct~$s$ and~$t$ in~$[k]$, 
	\begin{align}\label{eqn:MSRreconstruction1}
	\subm{\Gamma}_{s,t} &= \llbracket W \rrbracket _{s,t}+P^{\ell_s(k-1)}\llbracket Q \rrbracket_{s,t}\\
	\nonumber\llbracket \Gamma \rrbracket _{t,s} &= \llbracket W \rrbracket _{t,s}+P^{\ell_t(k-1)}\llbracket Q \rrbracket_{t,s}\\
   									\nonumber&= \llbracket W \rrbracket _{s,t}^{\top}+P^{\ell_t(k-1)}\llbracket Q \rrbracket_{s,t}^\top\\
   	\llbracket \Gamma \rrbracket _{t,s}^\top &= \llbracket W \rrbracket _{s,t}+\llbracket Q \label{eqn:MSRreconstruction2}\rrbracket_{s,t}\cdot\left(P^{\ell_t(k-1)}\right)^\top.
	\end{align}
	Thus, by subtracting~\eqref{eqn:MSRreconstruction2} from~\eqref{eqn:MSRreconstruction1} we have that
	\begin{align*}
	\subm{\Gamma}_{s,t}-\subm{\Gamma}_{t,s}^\top&=P^{\ell_s(k-1)}\llbracket Q \rrbracket_{s,t}-\llbracket Q\rrbracket_{s,t} \left(P^{\ell_t(k-1)}\right)^\top\\
	\left(\subm{\Gamma}_{s,t}-\subm{\Gamma}_{t,s}^\top\right)\cdot \left(P^{-\ell_t(k-1)}\right)^{\top}&=P^{\ell_s(k-1)}\llbracket Q \rrbracket_{s,t}\left(P^{-\ell_t(k-1)}\right)^{\top}-\llbracket Q\rrbracket_{s,t}.
	\end{align*}
	Now, it follows from Lemma~\ref{lemma:Vectorizing} that vectorizing both sides of this equation results in
	\begin{align}\label{eqn:MSRreconstruction4}
	\left(P^{-\ell_t(k-1)}\otimes P^{\ell_s(k-1)}-I_{\frac{b^2}{k^2}}\right)\cdot \vect(\subm{Q}_{s,t})=\vect\left(\left(\subm{\Gamma}_{s,t}-\subm{\Gamma}_{t,s}^\top\right)\cdot\left(P^{-\ell_t(k-1)}\right)^{\top}\right),
	\end{align}
	which may be seen as a linear system of equations whose variables are the unknown entries of~$\subm{Q}_{s,t}$. According to Lemma~\ref{lemma:eigenvalue1}, this equation has a unique solution if and only if~$1$ is not an eigenvalue of $P^{-\ell_t(k-1)}\otimes P^{\ell_s(k-1)}$. 
	
	Since the characteristic polynomial of any companion matrix is its corresponding polynomial, and since for~$P$ this polynomial is primitive, the eigenvalues of~$P$ are $\gamma,\gamma^q,\ldots,\gamma^{q^{b/k-1}}$, where~$\gamma$ is some primitive element in~$\bF_{q^{b/k}}$~\cite[Th.~4.1.1, p.~123]{HuffmanAndPless}. Therefore, the eigenvalues of~$P^{\ell_s(k-1)}$ are $\gamma^{\ell_s(k-1)},\gamma^{\ell_s(k-1)q},\ldots,\gamma^{\ell_s(k-1)q^{b/k-1}}$, the eigenvalues of~$P^{-\ell_t(k-1)}$ are $\gamma^{-\ell_t(k-1)},\gamma^{-\ell_t(k-1)q},\ldots,\gamma^{-\ell_t(k-1)q^{b/k-1}}$, and by Lemma~\ref{lemma:KroneckerEigenvalues}, the eigenvalues of $P^{-\ell_t(k-1)}\otimes P^{\ell_s(k-1)}$ are
	\begin{align*}
	\Delta\triangleq\left\{\gamma^{\ell_s(k-1)q^e-\ell_t(k-1)q^h }~\vert~ e,h\in\left\{0,1,\ldots,b/k-1\right\}\right\}.
	\end{align*}
	If~$1\in\Delta$, it follows that there exist~$e$ and~$h$ in~$\{0,1,\ldots,b/k-1\}$ such that 
	\begin{align*}
	\gamma^{\ell_s(k-1)q^e-\ell_t(k-1)q^h }=1,
	\end{align*}
	which implies that $\ell_s(k-1)q^e=\ell_t(k-1)q^h~(\bmod q^{b/k}-1)$. Therefore, there exists an integer~$t$ such that
	\begin{align*}
	\ell_s(k-1)q^e &= \ell_t(k-1)q^h + t(q^{b/k}-1)\\
	\ell_sq^e\cdot \frac{k-1}{g} &= \ell_tq^h\cdot\frac{k-1}{g} + t\cdot\frac{q^{b/k}-1}{g},
	\end{align*}
	and thus,
	\begin{align}\label{eqn:MSRreconstruction3}
	\ell_sq^e\cdot \frac{k-1}{g} &= \ell_tq^h\cdot\frac{k-1}{g} \left(\bmod \frac{q^{b/k}-1}{g}\right).
	\end{align}

	Since clearly, $\gcd(\frac{k-1}{g},\frac{q^{b/k}-1}{g})=1$, it follows that~$\frac{k-1}{g}$ is invertible modulo~$\frac{q^{b/k}-1}{g}$. Therefore,~\eqref{eqn:MSRreconstruction3} implies that $\ell_sq^e = \ell_tq^h (\bmod \frac{q^{b/k}-1}{g})$. Since $\gcd(q,\frac{q^{b/k}-1}{g})=1$, it follows that $q$ is invertible modulo~$\frac{q^{b/k}-1}{g}$. Hence, we have that $\ell_s = \ell_tq^{h-e} (\bmod \frac{q^{b/k}-1}{g})$ if $h\ge e$ and $\ell_sq^{e-h} = \ell_t (\bmod \frac{q^{b/k}-1}{g})$ if $h<e$. Either way, it follows that~$\ell_t$ and $\ell_s$, which are notations for~$i_{j_t}$ and~$i_{j_s}$, respectively, are in the same $q$-cyclotomic coset modulo~$\frac{q^{b/k}-1}{g}$, a contradiction to the choice of $i_1,\ldots,i_n$. Therefore, $1\notin\Delta$, which implies that~\eqref{eqn:MSRreconstruction4} is solvable, and the data collector may obtain $\subm{Q}_{s,t}$ and $\subm{W}_{s,t}$ for all distinct~$s$ and~$t$ in~$[k]$.
	
	Having this information, the data collector may consider the~$i$-th block-row of~$Q$, excluding the diagonal element,
	\begin{align*}
	\Phi_iS_2\begin{pmatrix}
	\Phi_1^\top & \cdots & \Phi_{i-1}^\top & \Phi_{i+1}^\top & \cdots & \Phi_k^\top
	\end{pmatrix},
	\end{align*}
	in which the matrix on the right is invertible by construction, and by Lemma~\ref{lemma:ThetaInv}. Hence, the data collector obtains $\Phi_1S_2,\ldots,\Phi_kS_2$, out of which any $k-1$ may once again be used to extract~$S_2$ by the same argument. Clearly, $S_1$ may be obtained similarly from the submatrices $\subm{W}_{s,t}$. 
\end{proof}

Note that in the above code $B=\frac{b^2(k-1)}{k}\left(1-\frac{1}{k}+\frac{1}{b}\right)$, and $\alpha k=\frac{b^2(k-1)}{k}$. Thus, the construction in this section \textit{does not} provide an MSR code. However, $\frac{B}{\alpha k}\overset{k\to\infty}{\longrightarrow}1$, and thus the cut-set bound is achieved asymptotically. A detailed comparison with PM-MSR codes and numerical examples appear in the Subsection~\ref{section:MSRasymptotic}.

This construction can be used to obtain NMSR codes for $d>2k-2$ in a recursive manner. 
By following a very similar outline to that of~\cite[Th.~6]{PMcodes}, we have the following.

\begin{theorem}\label{theorem:MSRd>2k-2}
	If~there exists an $(n',k',d',B',q,\alpha,\beta)$ regenerating code~$\cC'$ such that $\frac{B'}{\alpha k'}\overset{k'\to\infty}{\longrightarrow}1$, then there exists a $(n=n'-1,k=k'-1,d=d'-1,B'=B-\alpha,q,\alpha,\beta)$  regenerating code~$\cC$ such that $\frac{B}{\alpha k}\overset{k\to\infty}{\longrightarrow}1$.
\end{theorem}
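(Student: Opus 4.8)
The plan is to carry over the classical \emph{shortening} argument used in~\cite[Th.~6]{PMcodes} to this asymptotic setting. Starting from the linear code $\cC'$ on nodes $1,\dots,n'$, I would single out node $n'$ and restrict the set of admissible files to those $x$ for which node $n'$ stores a fixed, publicly known value; taking this value to be the content produced by the zero file, the admissible files form the kernel of the $\bF_q$-linear map $L$ that sends a file to the content of node $n'$. Since $L$ takes values in $\bF_q^{\alpha}$, rank--nullity gives $\dim\ker L\ge B'-\alpha$, so we may fix once and for all a subspace $U\subseteq\ker L$ with $\dim U=B'-\alpha$ and define $\cC$ to be the code on the $n=n'-1$ nodes $1,\dots,n'-1$ obtained by using only files from $U$; each such node stores exactly what it stored in $\cC'$, so $\alpha$ and $\beta$ are inherited unchanged, and the file size of $\cC$ is $B\triangleq B'-\alpha$ (this $B$ is the file size of $\cC$ appearing in the statement). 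The one idea that makes everything work is to view node $n'$ as a \emph{virtual} node: its content is now a constant determined by the code rather than by the file, so it can ``take part'' in any repair or reconstruction of $\cC'$ at \emph{zero} communication cost, because every symbol it would ever transmit is already known to the data collector or to the newcomer.

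Granting this, the required properties of $\cC$ are immediate. For \emph{reconstruction}: given any $K\subseteq[n'-1]$ with $|K|=k=k'-1$, the set $K\cup\{n'\}$ has size $k'$, so running the reconstruction algorithm of $\cC'$ recovers $x\in U$, hence the file of $\cC$, after downloading $\alpha$ symbols from each of the $k$ real nodes. For \emph{repair}: if node $i\in[n'-1]$ fails and $D\subseteq[n'-1]\setminus\{i\}$ has size $d=d'-1$ (a legal choice, since $d'-1\le n'-2$ follows from $d'\le n'-1$), then $D\cup\{n'\}$ is a valid size-$d'$ helper set of $\cC'$ avoiding $i$; the repair scheme of $\cC'$ then exactly restores node $i$, and the only actual transmissions are the $\beta$ symbols sent by each of the $d$ real helpers in $D$. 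Finally $k=k'-1\le d'-1=d\le n'-2=n-1$ and $\beta\le\alpha$, so $\cC$ is a legitimate regenerating code.

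For the asymptotic claim, with $B=B'-\alpha$ and $k=k'-1$ one has
\[
\frac{B}{\alpha k}=\frac{B'-\alpha}{\alpha(k'-1)}=\frac{B'}{\alpha k'}\cdot\frac{k'}{k'-1}-\frac{1}{k'-1}.
\]
Since $k'\to\infty$ whenever $k\to\infty$, the hypothesis $\frac{B'}{\alpha k'}\to 1$, combined with $\frac{k'}{k'-1}\to 1$ and $\frac{1}{k'-1}\to 0$, gives $\frac{B}{\alpha k}\to 1$; the matching inequality $\frac{B}{\alpha k}\le 1$ holds automatically, because $B$ is bounded by the cut-set sum~\eqref{eqn:cutset}, each term of which is at most $\alpha$.

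The only step that needs care is the bookkeeping around the virtual node: one must verify that its contribution to any repair (and to reconstruction) is a file-independent constant, so that shortening spends $\alpha$ units of file size but not a single unit of bandwidth; once this is pinned down, the rest is parameter chasing plus the one-line limit above. Iterating the theorem, starting from the $d=2k-2$ code of Section~\ref{section:MSRconstruction}, then yields NMSR codes for every $d\ge 2k-2$.
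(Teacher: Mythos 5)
Your proposal is correct and follows essentially the same shortening/puncturing argument as the paper: the paper assumes $\cC'$ is systematic without loss of generality and punctures a systematic node, which is just the special case of your kernel construction in which the fixed-content node stores $\alpha$ file symbols directly. You spell out the bookkeeping more explicitly (in particular the observation that the removed node becomes a zero-bandwidth ``virtual'' helper), but the underlying idea and the concluding limit computation coincide with the paper's proof.
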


\begin{proof}
	Without loss of generality assume that~$\cC'$ is systematic, and let~$\cC$ be the code which results from \textit{puncturing} the first systematic node of~$\cC'$. It follows from the properties of~$\cC'$ that~$\cC$ is a code with~$n=n'-1$ nodes, in which any $d=d'-1$ nodes can be used for repair, and any $k=k'-1$ nodes may be used for reconstruction. Moreover, $B=B'-\alpha$, and
	\begin{align*}
	\frac{B}{\alpha k}=\frac{B'-\alpha}{\alpha(k'-1)}=\frac{B'}{\alpha k'}\cdot \frac{k'}{k'-1}-\frac{1}{k'-1}\overset{k\to\infty}{\longrightarrow}1
	\end{align*}
\end{proof}

Notice that in Theorem~\ref{theorem:MSRd>2k-2}, if~$d'=ik'+j$ then $d=ik+j+(i-1)$. Hence, given the construction for~$d=2k-2$, one may obtain NMSR codes for larger values of~$d$. Moreover, it is evident that $\frac{B'}{\alpha k'}\le 1$, $\frac{k'}{k'-1}>1$, and that~$\frac{1}{k'-1}$ is negligible as~$k$ grows. Hence, the proof of Theorem~\ref{theorem:MSRd>2k-2} implies that $\frac{B}{\alpha k}$ tends to~$1$ \textit{faster} than $\frac{B'}{\alpha k'}$ does.

\subsection{The proximity of NMSR codes to MSR codes}\label{section:MSRasymptotic}
In this subsection the construction from Subsection~\ref{section:MSRconstruction} is compared with PM-MSR codes for the case $d=2k-2$. A comparison for the case $d>2k-2$ will appear in future versions of this paper. Following the reasoning which is described in Subsection~\ref{section:MBRconstruction}, the codes are compared over fields of even characteristic. That is, our codes are considered with~$q=2$, and since PM-MSR codes require $q\ge n(k-1)$, they are considered with~$q=2^{\ceil{\log(n(k-1))}}$. 

Similar to Definition~\ref{definition:EPM-MBR} and its subsequent discussion, EPM-MSR codes may also be defined. Note that a comparable loss of rate is apparent, not only due to the redundant representation, but also due to the symmetry which is required from the submatrices on the main diagonals of $S_1$ and~$S_2$.

The codes PM-MSR and EPM-MSR are concatenated to themselves in order to obtain the same $n,k,d,\alpha,$ and $\beta$, and only then the resulting file size and code rate are compared. The comparison for general parameters appears in Table~\ref{table:GeneralParametersMSR}, in which the values of $\alpha,\beta$, and $B$ are given in bits. Note that as in Subsection~\ref{section:MBRasymptotic}, the value of~$B$ for EPM-MSR is the number of \textit{information} bits, rather than the number of bits in the redundant representation. Further, numerical examples are given in Table~\ref{table:SpecificParamtersMSR}. Notice that it is possible to reduce the field size of PM-MSR codes in some cases~\cite{MSRq>n}. Yet, we compare our NMSR codes to PM-MSR for simplicity and generality.

\renewcommand{\arraystretch}{1.35}
\begin{table}[h]
	\centering
	\begin{tabular}{|c|c|c|c|}
		\cline{2-4}
		\multicolumn{1}{c}{~} &\multicolumn{1}{|c|}{NMSR} & \specialcell[c]{PM-MSR \\ concatenated $\frac{b^2}{\ceil{\log (n(k-1))}k^2}$ times}&\specialcell[c]{EPM-MSR \\ concatenated $\frac{b^2}{\ceil{\log (n(k-1))}^2k^2}$ times}\\ \hhline{-===}
		$q$ & $2$ & $2^{\ceil{\log (n(k-1))}}$&2\\ \hline
		$\beta$ & $\frac{b^2}{k^2}$  & \specialcell[c]{$\frac{b^2}{\ceil{\log (n(k-1))}k^2}$ field elements \\in vector form,\\i.e., $\frac{b^2}{k^2}$ bits.} & \specialcell[c]{$\frac{b^2}{\ceil{\log (n(k-1))}^2k^2}$ field elements\\in matrix form,\\i.e., $\frac{b^2}{k^2}$ bits.}\\ \hline
		$\alpha$& $\frac{b^2}{k^2}\cdot (k-1)$  & $\frac{b^2}{k^2}\cdot (k-1)$ & $\frac{b^2}{k^2}\cdot (k-1)$ \\ \hline
		$B$ & $\frac{b^2(k-1)}{k}\left(1-\frac{1}{k}+\frac{1}{b}\right)$ & $\frac{b^2(k-1)}{k}$ & $\frac{b^2(k-1)}{k\ceil{\log (n(k-1))}}$\\ \hline
		Rate & $\frac{k}{n}\left(1-\frac{1}{k}+\frac{1}{b}\right)$ & $\frac{k}{n}$ & $\frac{k}{n\ceil{\log(n(k-1))}}$\\ \hline
	\end{tabular}
	\caption{A comparison of parameters between our MSR codes (Subsection~\ref{section:MSRconstruction}) and the PM-MSR codes~\cite[Sec.~V]{PMcodes} for general $n,k,d=2k-2$.}\label{table:GeneralParametersMSR}
\end{table}
\renewcommand{\arraystretch}{1}

\begin{table}[h]
	\centering
		\begin{tabular}{|c|c|c|c|c|c|c|c|c|c|}
			\cline{2-10}
			\multicolumn{1}{c}{~}     & \multicolumn{1}{|c|}{$n$} & $k$ & $d$ & $\alpha$ & $\beta$ & $b$ & $q$ & $B$ & Rate \\\hhline{-=========}
			 NMSR  &\multirow{3}{*}{20}&\multirow{3}{*}{10}&\multirow{3}{*}{18}&\multirow{3}{*}{$45$KB}&\multirow{3}{*}{$5$KB}&\multirow{3}{*}{$200\cdot k$}&2&$405.225$KB&$0.45025$\\\cline{1-1}\cline{8-10}
			PM-MSR   &~					&~                 &~			      &~					 &~					   &~				  &256&$450$KB&$0.5$\\\cline{1-1}\cline{8-10}
			EPM-MSR   &~					&~                 &~			      &~					 &~					   &~				  &2&$56.25$KB&$0.0625$\\\hline\hline
			 NMSR  &\multirow{3}{*}{100}&\multirow{3}{*}{40}&\multirow{3}{*}{78}&\multirow{3}{*}{$7.02$MB}&\multirow{3}{*}{$0.18$MB}&\multirow{3}{*}{$1200\cdot k$}&2&$\approx 0.27$GB&$\approx 0.39$\\\cline{1-1}\cline{8-10}
			PM-MSR   &~					&~                 &~			      &~					 &~					   &~				  &4096&$0.28$GB&$0.4$\\\cline{1-1}\cline{8-10}
			EPM-MSR   &~					&~                 &~			      &~					 &~					   &~				  &2&$0.02$GB&$\approx 0.033$\\\hline\hline
			NMSR&\multirow{3}{*}{100}&\multirow{3}{*}{40}&\multirow{3}{*}{78}&\multirow{3}{*}{$175.5$MB}&\multirow{3}{*}{$4.5$MB}&\multirow{3}{*}{$6000\cdot k$}&2&$\approx 6.84$GB&$\approx 0.39$\\\cline{1-1}\cline{8-10}
			PM-MSR   &~					&~                 &~			      &~					 &~					   &~				  &4096&$7.02$GB&$0.4$\\\cline{1-1}\cline{8-10}
			EPM-MSR   &~					&~                 &~			      &~					 &~					   &~				  &2&$0.585$GB&$\approx 0.033$\\\hline\hline
			NMSR&\multirow{3}{*}{1000}&\multirow{3}{*}{400}&\multirow{3}{*}{798}&\multirow{3}{*}{$\approx 1800$KB}&\multirow{3}{*}{$\approx 4.5$KB}&\multirow{3}{*}{$190\cdot k$}&2&$\approx 0.718$GB&$\approx 0.39$\\\cline{1-1}\cline{8-10}
			PM-MSR   &~					&~                 &~			      &~					 &~					   &~				  &524288&$0.72$GB&$0.4$\\\cline{1-1}\cline{8-10}
			EPM-MSR   &~					&~                 &~			      &~					 &~					   &~				  &2&$37.9$MB&$\approx 0.02$\\\hline
		\end{tabular}
	\caption{A comparison of parameters between our code (Subsection~\ref{section:MSRconstruction}) and the PM-MSR code~\cite[Sec.~V]{PMcodes} for several common parameters $n,k,d=2k-2$.}\label{table:SpecificParamtersMSR}
\end{table}

\ifdefined\subspaces
\section{Connection to subspace codes}\label{section:SubspaceCodes}
Several connections between MBR codes and subspace codes were observed in the past~\cite{OggierProjective,Us,Hollman}. In these works, any storage node is associated with a subspace of dimension~$\alpha$ in $\bF_q^B$, and stores a projection of~$x$ on that subspace\footnote{That is, a multiplication of a spanning matrix of the subspace with the file~$x$.}. However, these approaches do not outperform PM-MBR codes, and are only applicable for restricted sets of parameters. In this section it is shown that the results of Section~\ref{section:MBRconstruction} may also be attained by a variant of the subspace code approach. This connection may seem as a generalization of the techniques in Section~\ref{section:MBRconstruction}, and besides being of independent interest, might be useful for applying our techniques in other scenarios. In what follows we present the essentials of subspace coding theory that are required for the construction which follows.

For any integer $m$, it is widely known that the vector space $\bF_q^m$ may be endowed with a multiplication operation using the vector space isomorphism $\bF_q^m\cong \bF_{q^m}$. Hence, every subspace of $\bF_q^m$ may be considered as a subspace of $\bF_{q^m}$. In what follows, $\grsmn{q}{m}{u}$ denotes the set of all $u$ dimensional subspaces of $\bF_{q^m}$. A (constant dimension) subspace code is a subset of $\grsmn{q}{m}{u}$, equipped with the subspace metric $d_S(U,V)=\dim U+\dim V-2\dim(U\cap V)$.

\textit{Cyclic subspace codes} are subspace codes which are closed under cyclic shifts, defined as follows. For a subspace $U\in\grsmn{q}{m}{u}$, and a nonzero field element $\alpha\in\bF_{q^m}^*\triangleq \bF_{q^m}\setminus\{0\}$, the cyclic shift~$\alpha U$ of~$U$ is the set $\alpha U\triangleq\{\alpha\cdot u\vert u\in U\}$. According to the distributive law, for any $\alpha\in\bF_{q^m}^*$, the set $\alpha U$ is a subspace of $\bF_{q^m}$ of the same dimension as $U$. 

A widely used notion in the theory of subspace codes is \textit{partial spreads} (see for example,~\cite[Sec.~III]{EtzionVardy}). A constant dimension subspace code~$\cC$ in $\grsmn{q}{m}{u}$ is called a partial spread if its minimum distance is~$2u$, i.e., each two subspaces intersect trivially. Furthermore, if each vector in~$\bF_{q^m}$ is contained in some subspace in~$\cC$ then it is called a \textit{spread}. It is known~\cite{EtzionVardy} that spreads exist if and only if~$u$ divides~$m$, and the set of all cyclic shifts of a subfield of~$\bF_{q^m}$ is a spread. In what follows, certain subsets of such spreads which satisfy an additional property are used. This property involves \textit{sets} of subspaces, in contrast with the distance property which only involves two, and is given by the following definitions. It is worth noting that partial spreads also appear in the work of~\cite{OggierProjective}.

\begin{definition}\label{definition:independentSubspaces}
	A set $\{U_i\}_{i=1}^k\subseteq\grsmn{q}{m}{u}$ is called an independent set if $\dim\left(\sum_{i=1}^{k}U_i\right)=\sum_{i=1}^{k}\dim U_i$.
\end{definition}

It is readily verified that a set $\{U_i\}_{i=1}^k$ is an independent set if and only if for all $j\in[k]$, we have that $U_j\cap\left(\sum_{i\in[k]\setminus\{j\}}U_i\right)=\{0\}$. Further, if $\{U_i\}_{i=1}^k\subseteq\grsmn{q}{b}{u}$ is an independent set then $u\cdot k\le m$.

\begin{definition}\label{definition:everyK}
	A set $\{U_i\}_{i=1}^n\subseteq\grsmn{q}{m}{u}$ is called an every-$k$ independent set if for every $T\subseteq[n]$, $|T|=k$ we have that $\{U_t\}_{t\in T}$ is an independent set.
\end{definition}

\begin{remark}\label{remark:Every1Indpendent}
For $u=1$ and $m=k$, an every-$k$ independent set may easily be obtained by considering any $[n,k]$ MDS code over $\bF_q$, and taking as $\{U_i\}_{i=1}^n$ the linear spans of the columns of its generator matrix~\cite[p.~119, Ex.~4.1]{Ronny'sBook}. 
Further, it can be shown (see Theorem~\ref{theorem:MDSEquivalence} in the Appendix) that for an integer~$b$ which is a multiple of~$k$, if an every-$k$ independent set $\{U_i\}_{i=1}^n\subseteq \grsmn{q}{b}{b/k}$ exists, then there exists an $[n,k]$ MDS code over $\bF_{q^{b/k}}$ which is linear over~$\bF_q$. Therefore, the notions of every-$k$ independent sets are equivalent to linear MDS codes up to a certain extent. 
\end{remark}

The parameters $q,n,k,d$ in this section are required to satisfy the conditions~\ref{itm:A1} and~\ref{itm:A2} from Section~\ref{section:MBR}. In addition we also have that $B=\frac{b(b+1)}{2}+b^2\left(\frac{d}{k}-1\right)$, where for convenience we denote $c\triangleq \frac{db}{k}-b$, so that $B=\frac{b(b+1)}{2}+bc$. Notice that $c$ is chosen such that $\frac{b}{k}=\frac{b+c}{d}$. The techniques in this section require to fix a certain mapping from $\bF_{q^{b+c}}$ to $\bF_{q}^{b+c}$, which is done as follows.

Since $\frac{b}{k}=\frac{b+c}{d}$ it follows that $d$ divides $b+c$, and hence $\bF_{q^{b/k}}=\bF_{q^{(b+c)/d}}$ is a subfield of $\bF_{q^{b+c}}$. Hence, let $\cV\triangleq\{v_1,\ldots,v_{d}\}$ be a basis of $\bF_{q^{b+c}}$ over $\bF_{q^{(b+c)/d}}$, and let $\cU\triangleq\{u_1,\ldots,u_{(b+c)/d}\}$ be a basis of $\bF_{q^{(b+c)/d}}$ over $\bF_q$. Since $\cV$ and $\cU$ are bases, it follows that 
\begin{align}\label{eqn:BasisVU}
\cV\cU\triangleq\left\{v_i\cdot u_j\Big\vert 1\le i\le d, 1\le j\le \frac{b+c}{d}\right\}
\end{align}

is a basis of $\bF_{q^{b+c}}$ over $\bF_q$ (see Lemma~\ref{lemma:BasisVUproof} in the appendix). The basis~$\cV\cU$ is used to fix a mapping $\Phi$ of the elements of $\bF_{q^{b+c}}$ to the elements of $\bF_{q}^{b+c}$. Identify the indices $1,\ldots,b+c$ with \[(1,1),(1,2),\ldots,(1,\frac{b+c}{d}),(2,1),(2,2),\ldots,(d,\frac{b+c}{d}),\]
respectively, and for $w\in\bF_{q^{b+c}}$, $w=\sum_{i,j}w_{i,j}v_iu_j$, where $w_{i,j}\in\bF_q$ for all $1\le i\le d,~1\le j\le \frac{b+c}{d}$, define $\Phi(w)$ as the vector of length $b+c$ over $\bF_q$ which contains $w_{i,j}$ in entry $(i,j)$. Notice that the function~$\Phi$ is linear over $\bF_q$, i.e., every $w,w'\in\bF_{q^{b+c}}$ and every~$\lambda,\mu\in\bF_q$ satisfy $\Phi(\lambda w+\mu w')=\lambda\Phi(w)+\mu \Phi(w')$. To avoid cumbersome notation, we use $w$ instead of $\Phi(w)$ wherever it is clear from context.

This section is organized as follows. In Subsection~\ref{section:EverydConstruction} we construct an every-$d$ independent set, for whom an additional property is proved. This property is a consequence of the particular representation $\Phi$ of $\bF_{q^{b+c}}$ as vectors in $\bF_{q}^{b+c}$. This set is then used to construct NMBR codes for any $d\ge k$ over any field $\bF_q$ in Subsection~\ref{section:SubspaceConstruction}. The encoding matrix of this code is shown to posses a Vandermonde-like structure in subsection~\ref{section:SubspaceVandermonde}.

\subsection{Construction of an every-$d$ independent set in $\grsmn{q}{b+c}{\frac{b+c}{d}}$}\label{section:EverydConstruction}
Since $b\ge \log_q n \cdot k$, it follows that $q^{b/k}=q^{(b+c)/d}\ge n$, and hence there exists Vandermonde matrices 
\begin{align}\label{eqn:MatrixA'}
A'&\triangleq 
\begin{pmatrix}
1 & 1 & \cdots & 1\\
\gamma_1 & \gamma_2 & \cdots & \gamma_n\\
\gamma_1^2 & \gamma_2^2 & \cdots & \gamma_n^2\\
\vdots & \vdots & \ddots & \vdots \\
\gamma_1^{d-1} & \gamma_2^{d-1} & \cdots & \gamma_n^{d-1}\\
\end{pmatrix},
&A\triangleq 
\begin{pmatrix}
1 & 1 & \cdots & 1\\
\gamma_1 & \gamma_2 & \cdots & \gamma_n\\
\gamma_1^2 & \gamma_2^2 & \cdots & \gamma_n^2\\
\vdots & \vdots & \ddots & \vdots \\
\gamma_1^{k-1} & \gamma_2^{k-1} & \cdots & \gamma_n^{k-1}\\
\end{pmatrix}
\end{align}
over $\bF_{q^{(b+c)/d}}$, with $\gamma_i\ne \gamma_j$ for all distinct $i$ and $j$ in $[n]$, which implies that every $d\times d$ submatrix of $A'$ and every $k\times k$ submatrix of $A$ are invertible. Using the basis $\cV=\{v_1,\ldots,v_d\}$ which was mentioned above, for every $i\in[n]$ define $\beta_i\triangleq \sum_{j=1}^{d}\gamma_i^{j-1}v_j$, and $\beta_i'\triangleq \sum_{j=1}^{k}\gamma_i^{j-1}v_j$. The proof of the following lemma relies on the fact that any $d$ (resp. $k$) columns of $A'$ (resp. $A$) are linearly independent over $\bF_{q^{b/k}}=\bF_{q^{(b+c)/d}}$.

\begin{lemma}~\label{lemma:BetaIndependent}
	\begin{itemize}
		\item [A.\namedlabel{itm:Every-kElements}{A}] Every $k$ distinct elements in $\{\beta_i'\}_{i=1}^n$ are linearly independent over $\bF_{q^{b/k}}$.
		\item [B.\namedlabel{itm:Every-dElements}{B}] Every $d$ distinct elements in $\{\beta_i\}_{i=1}^n$ are linearly independent over $\bF_{q^{b/k}}$.
	\end{itemize}
\end{lemma}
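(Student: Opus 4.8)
The plan is to prove parts \ref{itm:Every-kElements} and \ref{itm:Every-dElements} in parallel, since they are structurally identical: part \ref{itm:Every-kElements} concerns the matrix $A$ and the elements $\beta_i'$, while part \ref{itm:Every-dElements} concerns $A'$ and the $\beta_i$, and the only facts used are that every $k$ columns of $A$ (resp. every $d$ columns of $A'$) are linearly independent over $\bF_{q^{b/k}}=\bF_{q^{(b+c)/d}}$, which in turn follows from the Vandermonde structure and the distinctness of the $\gamma_i$. I will write the argument for part \ref{itm:Every-dElements}; part \ref{itm:Every-kElements} is obtained by replacing $d$ with $k$, $A'$ with $A$, and $\beta_i$ with $\beta_i'$ throughout.

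First I would fix a set $T=\{t_1,\ldots,t_d\}\subseteq[n]$ of size $d$ and suppose, toward a contradiction, that there are coefficients $c_1,\ldots,c_d\in\bF_{q^{b/k}}$, not all zero, with $\sum_{s=1}^{d}c_s\beta_{t_s}=0$. Next I would expand each $\beta_{t_s}=\sum_{j=1}^{d}\gamma_{t_s}^{j-1}v_j$ and interchange the order of summation, obtaining $\sum_{j=1}^{d}\big(\sum_{s=1}^{d}c_s\gamma_{t_s}^{j-1}\big)v_j=0$. The crucial point is that the coefficients $\sum_{s=1}^{d}c_s\gamma_{t_s}^{j-1}$ all lie in $\bF_{q^{b/k}}=\bF_{q^{(b+c)/d}}$, because each $c_s$ and each $\gamma_{t_s}$ does; hence, since $\cV=\{v_1,\ldots,v_d\}$ is a basis of $\bF_{q^{b+c}}$ \emph{over $\bF_{q^{(b+c)/d}}$}, linear independence of the $v_j$ over this field forces $\sum_{s=1}^{d}c_s\gamma_{t_s}^{j-1}=0$ for every $j\in[d]$.

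Finally, I would observe that this last system of $d$ equations says exactly that the vector $(c_1,\ldots,c_d)^\top$ lies in the kernel of the $d\times d$ submatrix of $A'$ formed by the columns indexed by $T$ (whose $(j,s)$ entry is $\gamma_{t_s}^{j-1}$). By the remark preceding the lemma, this submatrix is an invertible Vandermonde matrix over $\bF_{q^{(b+c)/d}}$, so its kernel is trivial, forcing $c_1=\cdots=c_d=0$ — contradicting the choice of the $c_s$. Therefore the $\beta_{t_s}$ are linearly independent over $\bF_{q^{b/k}}$, proving \ref{itm:Every-dElements}; the identical argument with the matrix $A$ and its invertible $k\times k$ Vandermonde submatrices proves \ref{itm:Every-kElements}.

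I do not anticipate a serious obstacle here: the proof is a clean two-step reduction (basis-independence over the intermediate field, then Vandermonde invertibility). The only point requiring a little care is keeping track of which field the independence statements live over — the $v_j$ are independent over $\bF_{q^{(b+c)/d}}$, not over $\bF_q$, and it is precisely the containment $c_s,\gamma_{t_s}\in\bF_{q^{(b+c)/d}}$ that lets the intermediate-field independence be applied. If anything is mildly delicate it is making sure the nesting of field extensions $\bF_q\subseteq\bF_{q^{(b+c)/d}}\subseteq\bF_{q^{b+c}}$, guaranteed by $d\mid b+c$, is invoked correctly; but this was already set up in the discussion of the basis $\cV\cU$ above.
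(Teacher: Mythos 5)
Your proof is correct and follows essentially the same route as the paper: expand each $\beta_{t_s}$ in the basis $\cV$, interchange sums, use independence of $\{v_1,\ldots,v_d\}$ over $\bF_{q^{(b+c)/d}}$ to force each inner sum to vanish, and conclude via invertibility of the $d\times d$ Vandermonde submatrix of $A'$. The paper states it as a direct implication rather than a contradiction, and it explicitly records the small observation (needed for Part A) that the truncated set $\{v_1,\ldots,v_k\}$ inherits independence from $\cV$, which you leave implicit in the phrase ``identical argument''; otherwise the two proofs coincide.
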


\begin{proof}
	Since the set $\{v_1,\ldots,v_d\}$ is a basis of $\bF_{q^{b+c}}$ over $\bF_{q^{(b+c)/d}}$, it follows that the set $\{v_1,\ldots,v_k\}$ is an independent set over $\bF_{q^{(b+c)/d}}$. Hence, the proofs of Part~\ref{itm:Every-kElements} and Part~\ref{itm:Every-dElements} are similar, and we list below only the proof of Part~\ref{itm:Every-dElements}.
	
	Let $J=\{j_1,\ldots,j_d\}$ be a subset of $[n]$ of size $d$, and assume that $\sum_{i=1}^{d}d_i\beta_{j_i}=0$ for some coefficients $d_i\in\bF_{q^{(b+c)/d}}$. According to the definition of the $\beta_i$-s we have that
	\begin{align}
	\nonumber\sum_{i=1}^{d}d_i\cdot \beta_{j_i}&=\sum_{i=1}^{d}d_i\sum_{t=1}^{d}\gamma_{j_i}^{t-1}v_t = \sum_{i=1}^{d}\sum_{t=1}^{d}d_i\gamma_{j_i}^{t-1}v_t=\sum_{t=1}^{d}\sum_{i=1}^{d}d_i\gamma_{j_i}^{t-1}v_t\\
	~&=\sum_{t=1}^{d}v_t\sum_{i=1}^{d}d_i\gamma_{j_i}^{t-1}=0.\label{eqn:PartBProof}
	\end{align}
	Since $d_i$ and $\gamma_{j_i}^{t-1}$ are elements of $\bF_{q^{(b+c)/d}}$ for all $i$ and $t$, we have that~\eqref{eqn:PartBProof} is a linear combination of the independent set $\{v_i\}_{i=1}^d$ over $\bF_{q^{(b+c)/d}}$. Therefore, $\sum_{i=1}^{d}d_i\gamma_{j_i}^{t-1}=0$ for all $t\in[d]$, which implies that $\sum_{i=1}^{d}d_ic_{j_i}=0$, where $c_{j_i}$ is the $i$-th column of $A'$. Since $A'$ is a Vandermonde matrix, the column vectors $\{c_{j_i}\}_{i=1}^d$ are independent over $\bF_{q^{(b+c)/d}}$, and thus $d_i=0$ for all~$i$.
\end{proof}

For $i\in[n]$ let $V_i\triangleq \beta_i\bF_{q^{(b+c)/d}}$ and $V_i'\triangleq \beta_i'\bF_{q^{(b+c)/d}}$, and notice that $V_i,V_i'\in\grsmn{q}{b+c}{\frac{b+c}{d}}$.

\begin{lemma}~\label{lemma:VeverykdIndependent}
	\begin{itemize}
		\item [A.\namedlabel{itm:Every-kSubspaces}{A}] The set $\{V_i'\}_{i=1}^{n}$ is an every-$k$ independent set.
		\item [B.\namedlabel{itm:Every-dSubspaces}{B}] The set $\{V_i\}_{i=1}^{n}$ is an every-$d$ independent set.		
	\end{itemize}
\end{lemma}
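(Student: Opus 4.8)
The plan is to derive both statements directly from Lemma~\ref{lemma:BetaIndependent} by converting linear independence of the $\beta_i$'s (resp.\ of the $\beta_i'$'s) over $\bF_{q^{b/k}}$ into independence of the corresponding subspaces over $\bF_q$. First I would record the arithmetic of the field tower. Since $\frac{b}{k}=\frac{b+c}{d}$, the field $\bF_{q^{b/k}}=\bF_{q^{(b+c)/d}}$ is a subfield of $\bF_{q^{b+c}}$ with $[\bF_{q^{b+c}}:\bF_{q^{b/k}}]=d$, and for each $i$ the set $V_i=\beta_i\bF_{q^{b/k}}$ (resp.\ $V_i'=\beta_i'\bF_{q^{b/k}}$) is a one-dimensional $\bF_{q^{b/k}}$-subspace of $\bF_{q^{b+c}}$, hence a subspace of $\bF_q$-dimension $[\bF_{q^{b/k}}:\bF_q]=\frac{b}{k}=\frac{b+c}{d}$; this is exactly the claim $V_i,V_i'\in\grsmn{q}{b+c}{(b+c)/d}$ recorded just before the lemma. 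The one structural fact I would isolate is that a finite sum of $\bF_{q^{b/k}}$-subspaces of $\bF_{q^{b+c}}$ is again an $\bF_{q^{b/k}}$-subspace, and therefore equals the $\bF_q$-linear span of those subspaces (it is an $\bF_q$-subspace containing all of them, and visibly the smallest such). In particular, for any $T=\{j_1,\dots,j_d\}\subseteq[n]$ one has $\sum_{t\in T}V_t=\Span{\beta_{j_1},\dots,\beta_{j_d}}_{\bF_{q^{b/k}}}$, and likewise for $k$-subsets with the $V_t'$ and $\beta_t'$.

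With this in hand each part is essentially one line. For Part~\ref{itm:Every-dSubspaces}, fix $T=\{j_1,\dots,j_d\}$ of size $d$. By Lemma~\ref{lemma:BetaIndependent}\ref{itm:Every-dElements} the elements $\beta_{j_1},\dots,\beta_{j_d}$ are $\bF_{q^{b/k}}$-linearly independent, so their span has $\bF_{q^{b/k}}$-dimension $d$; since $\dim_{\bF_{q^{b/k}}}\bF_{q^{b+c}}=d$, this span is all of $\bF_{q^{b+c}}$. Hence $\dim_{\bF_q}\!\big(\sum_{t\in T}V_t\big)=b+c=\sum_{t\in T}\dim_{\bF_q}V_t$, which is precisely the defining condition of an independent set (Definition~\ref{definition:independentSubspaces}); as $T$ was arbitrary, $\{V_i\}_{i=1}^n$ is an every-$d$ independent set in the sense of Definition~\ref{definition:everyK}. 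Part~\ref{itm:Every-kSubspaces} is the same argument with $k$ in place of $d$: for $|T|=k$, Lemma~\ref{lemma:BetaIndependent}\ref{itm:Every-kElements} shows that $\Span{\beta_{j_1}',\dots,\beta_{j_k}'}_{\bF_{q^{b/k}}}$ has $\bF_{q^{b/k}}$-dimension $k$, hence $\bF_q$-dimension $k\cdot\frac{b}{k}=b=\sum_{t\in T}\dim_{\bF_q}V_t'$, so $\{V_t'\}_{t\in T}$ is independent.

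I do not expect a genuine obstacle: the content is entirely carried by Lemma~\ref{lemma:BetaIndependent} together with the identity $\frac{b}{k}=\frac{b+c}{d}$. The only point to handle with care is the base-change bookkeeping between $\bF_q$ and $\bF_{q^{b/k}}$---namely that the $\bF_q$-dimension of an $\bF_{q^{b/k}}$-subspace is $\frac{b}{k}$ times its $\bF_{q^{b/k}}$-dimension, and that forming sums of subspaces commutes with this base change---so that $d$ vectors linearly independent over $\bF_{q^{b/k}}$ span an $\bF_q$-subspace of dimension exactly $b+c$.
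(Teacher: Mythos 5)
Your proposal is correct and follows essentially the same route as the paper: both reduce the claim to Lemma~\ref{lemma:BetaIndependent} and then bookkeep the change of base field from $\bF_{q^{b/k}}$ to $\bF_q$. The paper phrases the dimension count via injectivity of the evaluation map $(\ell_1,\dots,\ell_k)\mapsto\sum_i\beta'_{j_i}\ell_i$, while you compute the $\bF_{q^{b/k}}$-dimension of $\sum_t V_t$ directly and multiply by $[\bF_{q^{b/k}}:\bF_q]$; these are the same argument in two costumes.
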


\begin{proof}
	To prove Part~\ref{itm:Every-kSubspaces}, let $J=\{j_1,\ldots,j_k\}$ be a subset of $[n]$ of size $k$. Consider the function $f:\bF_{q^{b/k}}^k\to\sum_{i=1}^{k}V'_{j_i}$ which maps $(\ell_1,\ldots,\ell_k)$ to $\sum_{i=1}^{k}\beta'_{j_i}\ell_i$. It is readily verified that $\dim\left(\sum_{i=1}^{k}V'_{j_i}\right)=b$ if and only if~$f$ is injective. Assume for contradiction that~$f$ is not injective, i.e., there exist two different tuples of elements $(e_1,\ldots,e_k)\in\bF_{q^{b/k}}^k$ and $(f_1,\ldots,f_k)\in\bF_{q^{b/k}}^k$ such that
	\begin{align*}
	\sum_{i=1}^{k}\beta_{j_i}'e_i=\sum_{i=1}^{k}\beta_{j_i}'f_i.
	\end{align*}
	
	Therefore, $\sum_{i=1}^{k}\beta'_{j_i}(e_i-f_i)=0$, which by Lemma~\ref{lemma:BetaIndependent}, Part~\ref{itm:Every-kElements}, implies that $e_i=f_i$ for all $i$, a contradiction. This implies that  $\dim\left(\sum_{i=1}^{k}V'_{j_i}\right)=b=k\cdot\frac{b}{k}=\sum_{i=1}^{k}\dim V'_{j_i}$, as required.
	
	To prove Part~\ref{itm:Every-dSubspaces}, let $T=\{t_1,\ldots,t_d\}$ be a subset of $[n]$ of size $d$. Consider the function $g:\bF_{q^{(b+c)/d}}^d\to\sum_{i=1}^{d}V_{j_i}$ which maps $(\ell_1,\ldots,\ell_d)$ to $\sum_{i=1}^{d}\beta_{j_i}\ell_i$. It is readily verified that $\dim\left(\sum_{i=1}^{k}V_{j_i}\right)=b+c$ if and only if~$g$ is injective. Assume for contradiction that~$g$ is not injective, i.e., there exist two different tuples of elements $(e_1,\ldots,e_d)\in\bF_{q^{(b+c)/d}}^d$ and $(f_1,\ldots,f_d)\in\bF_{q^{(b+c)/d}}^d$ such that
	\begin{align*}
	\sum_{i=1}^{d}\beta_{j_i}e_i=\sum_{i=1}^{d}\beta_{j_i}f_i.
	\end{align*}
	
	Therefore, $\sum_{i=1}^{d}\beta_{j_i}(e_i-f_i)=0$, which by Lemma~\ref{lemma:BetaIndependent}, Part~\ref{itm:Every-dElements}, implies that $e_i=f_i$ for all $i$, a contradiction. Hence, $\dim\left(\sum_{i=1}^{d}V_{j_i}\right)=b+c=d\cdot\frac{b+c}{d}=\sum_{i=1}^{d}\dim V_{j_i}$.
\end{proof}

The following technical lemma is essential for the construction which follows. The proof of this lemma uses the function~$\Phi$, which maps an element of $\bF_{q^{b+c}}$ to its representation according to the basis $\cV\cU$~\eqref{eqn:BasisVU}, as defined in the beginning of Section~\ref{section:SubspaceCodes}.

\begin{lemma}\label{lemma:technical}
	For all $i\in[n]$, there exists a matrix $\cM_i\in\bF_{q}^{(b+c)/d \times b}$ such that $V_i=\left<\cM_i\vert \cN_i\right>$ for some matrix $\cN_i\in\bF_{q}^{(b+c)/d \times c}$ and $V_i' =\left<\cM_i\vert \bold{0}\right>$, where $\bold{0}$ is the $\frac{b+c}{d}\times c$ zero matrix.
\end{lemma}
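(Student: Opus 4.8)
The plan is to produce an explicit generator matrix of $V_i$ whose block structure mirrors the powers of the $\bF_q$-linear multiplication-by-$\gamma_i$ map on the subfield $\bF_{q^{(b+c)/d}}$, and then to read off $\cM_i$ as its first $b$ columns. The single fact about $\Phi$ that drives everything is the following: since $\Phi$ merely records coordinates in the basis $\cV\cU$ from~\eqref{eqn:BasisVU}, for every $j\in[d]$ and every $s$ in the subfield $\bF_{q^{(b+c)/d}}$, writing $s=\sum_m s_m u_m$ with $s_m\in\bF_q$ gives $v_j s=\sum_m s_m(v_j u_m)$, so $\Phi(v_j s)$ is supported precisely on the $j$-th block of $\frac{b+c}{d}$ coordinates, where it equals the coordinate vector $\phi_0(s)\triangleq(s_1,\ldots,s_{(b+c)/d})$ of $s$; here $\phi_0\colon\bF_{q^{(b+c)/d}}\to\bF_q^{(b+c)/d}$ denotes the $\bF_q$-linear coordinate isomorphism attached to $\cU$.

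Next I would introduce $G_i\in\bF_q^{(b+c)/d\times(b+c)/d}$, the matrix of the $\bF_q$-linear map given by multiplication by $\gamma_i$ on $\bF_{q^{(b+c)/d}}$ with respect to $\cU$, so that $\phi_0(\gamma_i^{t}s)=G_i^{t}\phi_0(s)$ for every $t\ge 0$. Combining $\bF_q$-additivity of $\Phi$ with the identity $\beta_i s=\sum_{j=1}^{d}v_j(\gamma_i^{j-1}s)$ and the fact that each $\gamma_i^{j-1}s$ again lies in the subfield, the observation above shows that the $j$-th block of $\Phi(\beta_i s)$ is $G_i^{j-1}\phi_0(s)$; likewise, since $\beta_i'=\sum_{j=1}^{k}\gamma_i^{j-1}v_j$, the $j$-th block of $\Phi(\beta_i' s)$ is $G_i^{j-1}\phi_0(s)$ for $j\le k$ and is zero for $j>k$. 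Letting $s$ run through $u_1,\ldots,u_{(b+c)/d}$ (so that $\phi_0(s)$ runs through the standard basis) and stacking the vectors $\Phi(\beta_i u_\ell)$, resp.\ $\Phi(\beta_i' u_\ell)$, as rows, I obtain generator matrices
\[
R_i=\bigl(\,I\,\vert\,G_i^\top\,\vert\,(G_i^2)^\top\,\vert\,\cdots\,\vert\,(G_i^{d-1})^\top\,\bigr),\qquad
R_i'=\bigl(\,I\,\vert\,G_i^\top\,\vert\,\cdots\,\vert\,(G_i^{k-1})^\top\,\vert\,\bold{0}\,\bigr)
\]
of $V_i$ and $V_i'$ (each block has width $\frac{b+c}{d}$, and $\bold{0}$ is $\frac{b+c}{d}\times c$): indeed, by $\bF_q$-additivity of $\Phi$ their row spans are $\{\Phi(\beta_i s):s\in\bF_{q^{(b+c)/d}}\}=V_i$ and $\{\Phi(\beta_i' s):s\in\bF_{q^{(b+c)/d}}\}=V_i'$. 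Since $b=k\cdot\frac{b+c}{d}$ and $c=(d-k)\cdot\frac{b+c}{d}$, taking $\cM_i$ to be the first $b$ columns of $R_i$ (equivalently, of $R_i'$) and $\cN_i$ to be the last $c$ columns of $R_i$ yields $V_i=\left<\cM_i\vert\cN_i\right>$ and $V_i'=\left<\cM_i\vert\bold{0}\right>$, as required.

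The only delicate point is bookkeeping: one must keep the row/column conventions of $\Phi$, $\phi_0$, and $G_i$ mutually consistent — this is what forces the transposes in $R_i$, though their precise placement is immaterial to the statement — and one should check that $\cM_i$ has full row rank $\frac{b+c}{d}$, which is immediate since its first $\frac{b+c}{d}$ columns form the identity (so $V_i$ and $V_i'$ are genuinely $\frac{b+c}{d}$-dimensional, consistent with Lemma~\ref{lemma:VeverykdIndependent}). I do not expect a substantive obstacle: the content of the lemma is precisely that, under $\Phi$, multiplication by $\gamma_i$ on the subfield acts blockwise by powers of $G_i$, and that the defining formulas for $\beta_i$ and $\beta_i'$ differ only in that the latter drops the final $d-k$ blocks.
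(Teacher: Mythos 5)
Your proof is correct and takes essentially the same approach as the paper's: both expand $\Phi(\beta_i u_j)$ (equivalently $\Phi(\beta_i s)$) using $\bF_q$-linearity of $\Phi$ and the observation that $\Phi(v_t s')$ is supported on the $t$-th block of length $\frac{b+c}{d}$ where it equals the $\cU$-coordinate vector of $s'$, then read off the generator matrix block by block and identify $\cM_i$ with the first $b$ columns. Your introduction of the multiplication-by-$\gamma_i$ matrix $G_i$ is just a compact repackaging of the paper's blocks $\Psi(\gamma_i^{t-1}u_j)$, and the extra remark that $\cM_i$ has full row rank is a harmless (and correct) addition.
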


\begin{proof}
	By definition, the subspace $V_i$ is spanned by the vectors $\{\beta_iu_j\}_{j=1}^{(b+c)/d}=\{\sum_{t=1}^{d}v_t\gamma_i^{t-1}u_j\}_{j=1}^{(b+c)/d}$, and similarly,  the subspace $V_i'$ is spanned by the vectors $\{\beta_i'u_j\}_{j=1}^{(b+c)/d}=\{\sum_{t=1}^{k}v_t\gamma_i^{t-1}u_j\}_{j=1}^{(b+c)/d}$. 

	Since the function~$\Phi$ is linear, it follows that for every $j\in[\frac{b+d}{d}]$ and every~$i\in[n]$, we have that
	\begin{align*}
	\Phi\left(\sum_{t=1}^{d}v_t\gamma_i^{t-1}u_j\right)=\sum_{t=1}^{d}\Phi\left(v_t\gamma_i^{t-1}u_j\right).
	\end{align*}
	
	Moreover, it follows from the definition of~$\Phi$ that for all $t\in[d]$, the vector $\Phi\left(v_t\gamma_i^{t-1}u_j\right)$ contains the representation of $\gamma_i^{t-1}u_j\in\bF_{q^{(b+c)/d}}$ by the basis $\cU$ in entries $(t,1),\ldots,(t,\frac{b+c}{d})$, and zero elsewhere. That is, if $\Psi:\bF_{q^{(b+c)/d}}\to\bF_{q}^{(b+c)/d}$ is the function which maps an element of~$\bF_{q^{(b+c)/d}}$ to its representation by $\cU$, and $\Psi(\gamma_i^{t-1}u_j)\triangleq (\eta_1,\ldots,\eta_{(b+c)/d})$, then
	\begin{align*}
	\Phi\left(v_t\gamma_i^{t-1}u_j\right)&=\Phi\left(v_t\sum_{s=1}^{(b+c)/d}\eta_su_s\right)\\
	~&= \Phi\left(\sum_{s=1}^{(b+c)/d}\eta_sv_tu_s\right)\\
	~&= (0,\ldots,0,\underbrace{\eta_1,\ldots,\eta_{(b+c)/d}}_\text{ \scriptsize$\begin{array}{cc}\text{indices}\\ (t,1),\ldots,(t,\frac{b+c}{d}) 	
		\end{array} $ },0,\ldots,0).
	\end{align*}
	
	Hence, for every $j\in[\frac{b+c}{d}]$ and every~$i\in[n]$ we have that
	\begin{align*}
	\Phi\left(\beta_iu_j\right)=\Phi\left(\sum_{t=1}^{d}v_t\gamma_i^{t-1}u_j\right)=\Psi\left(u_j\right)\circ \Psi\left(\gamma_iu_j\right)\circ\cdots\circ\Psi\left(\gamma_i^{d-1}u_j\right),
	\end{align*}
	where~$\circ$ denotes concatenation, and similarly,
	\begin{align*}
	\Phi\left(\beta_i'u_j\right)=\Phi\left(\sum_{t=1}^{k}v_t\gamma_i^{t-1}u_j\right)=\Psi\left(u_j\right)\circ \Psi\left(\gamma_iu_j\right)\circ\cdots\circ\Psi\left(\gamma_i^{k-1}u_j\right)\circ\underbrace{		\Psi(0)\circ\cdots\circ\Psi(0)}_\text{$d-k$ times}.
	\end{align*}

	Hence, for every~$i\in[n]$, the subspaces $V_i$ and $V_i'$ have the spanning matrices
	\begin{align*}
	V_i=\left<\left(
	\begin{array}{c|c|c|c}
	\Psi(u_1)        &\Psi(\gamma_i u_1)        &\cdots&\Psi(\gamma_i^{d-1}u_1)\\
	\Psi(u_2)        &\Psi(\gamma_i u_2)        &\cdots&\Psi(\gamma_i^{d-1}u_2)\\
	\vdots   	     &\vdots              		&\ddots&\vdots \\
	\Psi(u_{(b+c)/d})&\Psi(\gamma_i u_{(b+c)/d})&\cdots&\Psi(\gamma_i^{d-1}u_{(b+c)/d})
	\end{array}
	\right)\right>,~\text{and}\\
	V_i'=\left<\left(
	\begin{array}{c|c|c|c|c|c|c}
	\Psi(u_1)        &\Psi(\gamma_i u_1)        &\cdots&\Psi(\gamma_i^{k-1}u_1)&0&\cdots&0\\
	\Psi(u_2)        &\Psi(\gamma_i u_2)        &\cdots&\Psi(\gamma_i^{k-1}u_2)&0&\cdots&0\\
	\vdots   	     &\vdots                    &\ddots&\vdots &\vdots&\cdots&\vdots\\
	\Psi(u_{(b+c)/d})&\Psi(\gamma_i u_{(b+c)/d})&\cdots&\Psi(\gamma_i^{k-1}u_{(b+c)/d})&0&\cdots&0
	\end{array}
	\right)\right>.
	\end{align*}
\end{proof}

Thus, for any given set of $n$ distinct elements $\{\gamma_i\}_{i=1}^n$ in $\bF_{q^{(b+c)/d}}$, define the following matrices which will be used in the sequel. 

\begin{align}\label{eqn:Mi}
\cM_i\triangleq 
\begin{pmatrix}
	\begin{array}{c|c|c|c}
	\Psi(u_1)        &\Psi(\gamma_i u_1)        &\cdots&\Psi(\gamma_i^{k-1}u_1)\\
	\Psi(u_2)        &\Psi(\gamma_i u_2)        &\cdots&\Psi(\gamma_i^{k-1}u_2)\\
	\vdots   	     &\vdots                    &\ddots&\vdots \\
	\Psi(u_{(b+c)/d})&\Psi(\gamma_i u_{(b+c)/d})&\cdots&\Psi(\gamma_i^{k-1}u_{(b+c)/d})
	\end{array}
\end{pmatrix}\in \bF_q^{\frac{b}{k}\times b}
\end{align}

\begin{align}\label{eqn:Ni}
\cN_i\triangleq 
\begin{pmatrix}
	\begin{array}{c|c|c}
	\Psi(\gamma_i^{k} u_1)        &\cdots&\Psi(\gamma_i^{d-1}u_1)\\
	\Psi(\gamma_i^{k} u_2)        &\cdots&\Psi(\gamma_i^{d-1}u_2)\\
	\vdots                          &\ddots&\vdots\\
	\Psi(\gamma_i^{k} u_{(b+c)/d})&\cdots&\Psi(\gamma_i^{d-1}u_{(b+c)/d})
	\end{array}
\end{pmatrix}\in \bF_q^{\frac{b}{k}\times c}
\end{align}

\subsection{Nearly MBR codes from subspace codes}\label{section:SubspaceConstruction}
Arrange the symbols of the file $x\in \bF_q^B$ as in~\eqref{eqn:DataMatrix}. Encode $x$ as follows,
\begin{align*}
\begin{pmatrix}
\cM_1 & \cN_1 \\
\cM_2 & \cN_2 \\
\vdots & \vdots \\
\cM_n & \cN_n \\
\end{pmatrix} \cdot 
\begin{pmatrix}
S & T\\
T^\top & 0
\end{pmatrix}=
\begin{pmatrix}
\cM_1S+\cN_1T^\top & \cM_1 T\\
\cM_2S+\cN_2T^\top & \cM_2 T\\
\vdots & \vdots \\
\cM_nS+\cN_nT^\top & \cM_n T\\
\end{pmatrix},
\end{align*}
where $\cM_i\in\bF_q^{b/k\times b}$ and $\cN_i\in\bF_q^{b/k\times c}$ were defined in~\eqref{eqn:Mi} and~\eqref{eqn:Ni}, respectively. By the definition of the matrices $\cM_i,\cN_i,S,$ and $T$, we have that $\alpha=\frac{(b+c)^2}{d}=\frac{b^2}{k^2}\cdot d$.

\begin{theorem}\label{theorem:d>kRepair}
	In the above code, exact repair of any failed node may be achieved by downloading $\beta\triangleq \frac{(b+c)^2}{d^2}=\frac{b^2}{k^2}$ field elements from any $d$ of the remaining nodes.
\end{theorem}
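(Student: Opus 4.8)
The plan is to mirror the repair argument of Theorem~\ref{theorem:MBRrepair}, exploiting the fact that the subspace-based encoding matrix $(\cM_i \mid \cN_i)$ shares the same Vandermonde-like structure over $\bF_q$. First I would let node $\ell$ be the failed node, fix a helper set $D=\{j_1,\ldots,j_d\}\subseteq [n]\setminus\{\ell\}$, and have each helper $j_t$ compute and transmit the $\frac{b}{k}\times\frac{b}{k}$ matrix $(\cM_{j_t}\mid \cN_{j_t})\,X\,(\cM_\ell\mid\cN_\ell)^\top$, which amounts to $\beta=\frac{b^2}{k^2}$ field elements as claimed. Stacking these, the newcomer obtains $M_D X (\cM_\ell\mid\cN_\ell)^\top$, where $M_D$ is the $d$ block-rows of $\begin{pmatrix}\cM_1 & \cN_1\\ \vdots & \vdots\\ \cM_n & \cN_n\end{pmatrix}$ indexed by $D$.

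The crux is to show $M_D$ is invertible. Here I would invoke Lemma~\ref{lemma:technical} together with the construction~\eqref{eqn:Mi}--\eqref{eqn:Ni}: each block-row $(\cM_i\mid\cN_i)$ is, by its very definition, the $\Phi$-image of the generating vectors $\{\beta_i u_j\}_j$ of the subspace $V_i$, and concretely it is the matrix whose $t$-th block of columns is $\bigl(\Psi(\gamma_i^{t-1}u_1),\ldots,\Psi(\gamma_i^{t-1}u_{(b+c)/d})\bigr)^\top$ for $t=1,\ldots,d$. In other words, $M_D$ is exactly $\Theta$-like (via the basis $\cU$) applied to a block-Vandermonde matrix built from the distinct nodes $\gamma_{j_1},\ldots,\gamma_{j_d}$. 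Since any $d\times d$ submatrix of the Vandermonde matrix $A'$ in~\eqref{eqn:MatrixA'} is invertible, the corresponding $d$ block-rows form an invertible matrix over $\bF_q$; the cleanest way to phrase this is via Lemma~\ref{lemma:VeverykdIndependent}, Part~\ref{itm:Every-dSubspaces}, which says $\{V_i\}_{i=1}^n$ is an every-$d$ independent set, so $\dim\sum_{t=1}^d V_{j_t}=b+c$ and hence the stacked generating matrix $M_D\in\bF_q^{(b+c)\times(b+c)}$ has full rank.

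Once $M_D$ is known to be invertible, the newcomer left-multiplies by $M_D^{-1}$ to recover $X(\cM_\ell\mid\cN_\ell)^\top$, and since $X$ is symmetric (it has the form $\begin{pmatrix}S & T\\ T^\top & 0\end{pmatrix}$ with $S$ symmetric), transposing yields $(\cM_\ell\mid\cN_\ell)X$, which is precisely the content of node $\ell$. Finally, I would double-check the bandwidth bookkeeping: each of the $d$ transmitted matrices lies in $\bF_q^{\frac{b}{k}\times\frac{b}{k}}$ and $\frac{b}{k}=\frac{b+c}{d}$, so each contributes $\left(\frac{b+c}{d}\right)^2=\frac{(b+c)^2}{d^2}=\frac{b^2}{k^2}$ elements, matching the stated $\beta$.

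The main obstacle is the identification of $M_D$ with a $\Theta$-type image of a Vandermonde matrix: unlike in Section~\ref{section:MBRconstruction}, where the encoding matrix was literally $\Theta$ of a Vandermonde matrix over $\bF_{q^{b/k}}$ and Lemma~\ref{lemma:ThetaInv} applied directly, here the block rows come from the two-step basis $\cV\cU$, so one must argue that non-invertibility of $M_D$ would force a nontrivial $\bF_q$-linear (equivalently, by Lemma~\ref{lemma:BetaIndependent}, an $\bF_{q^{(b+c)/d}}$-linear) dependence among $d$ columns of $A'$, contradicting the Vandermonde property. Invoking Lemma~\ref{lemma:VeverykdIndependent} dispatches this, but it is the step that requires the most care.
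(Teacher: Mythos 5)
Your proposal reproduces the paper's proof essentially verbatim: the same repair protocol (each helper $j_t$ sends the $\frac{b+c}{d}\times\frac{b+c}{d}$ matrix $(\cM_{j_t}\mid\cN_{j_t})X(\cM_\ell\mid\cN_\ell)^\top$), invertibility of the stacked matrix $(\cM\mid\cN)_D$ established by combining Lemma~\ref{lemma:technical} with Lemma~\ref{lemma:VeverykdIndependent}, Part~\ref{itm:Every-dSubspaces}, and recovery by left-multiplication and transposition using the symmetry of $X$. The preliminary musings about realizing $M_D$ as a $\Theta$-type image of a block-Vandermonde matrix are an unnecessary detour (and indeed not quite literally true, since the column blocks are indexed by the $\cU$-basis rather than arising from $\theta$ of a single companion matrix), but you correctly abandon that route and invoke the every-$d$ independence of $\{V_i\}$, which is exactly the paper's argument.
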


\begin{proof}
	Assume that node $i$ failed, and $D=\{j_1,\ldots,j_d\}$ is a subset of $[n]$ of size $d$ such that $i\notin D$. To repair node $i$, every node $j_t\in J$ computes $(\cM_{j_t}\vert \cN_{j_t})X(\cM_i\vert \cN_i)^\top$, which is a $\frac{b+c}{d}\times\frac{b+c}{d}$ matrix over~$\bF_q$, and sends it to the newcomer. The newcomer obtains
	\begin{align*}
		\begin{pmatrix}
		(\cM_{j_1}\vert \cN_{j_1})X(\cM_i\vert \cN_i)^\top\\
		(\cM_{j_2}\vert \cN_{j_2})X(\cM_i\vert \cN_i)^\top\\
		\vdots\\
		(\cM_{j_d}\vert \cN_{j_d})X(\cM_i\vert \cN_i)^\top\\
		\end{pmatrix}=
		\begin{pmatrix}
		(\cM_{j_1}\vert \cN_{j_1})\\
		(\cM_{j_2}\vert \cN_{j_2})\\
		\vdots\\
		(\cM_{j_d}\vert \cN_{j_d})
		\end{pmatrix}\cdot X \cdot(\cM_i\vert \cN_i)^\top\triangleq (\cM\vert \cN)_DX (\cM_i\vert \cN_i)^\top.
	\end{align*}
	
	According to Lemma~\ref{lemma:technical} we have that $\left<(\cM_{j_i}\vert \cN_{j_i})\right>=V_{j_i}$, and by Lemma~\ref{lemma:VeverykdIndependent}, Part~\ref{itm:Every-dSubspaces}, we have that the row span of $(\cM\vert \cN)_D\in\bF_q^{(b+c)\times(b+c)}$ is~$\bF_q^{b+c}$, and hence it is invertible. Thus, the newcomer may obtain the missing data by multiplying by $(\cM\vert \cN)_D^{-1}$ and transposing.
\end{proof}

\begin{theorem}\label{theorem:d>kReconstruction}
	In the above code, reconstruction may be achieved by downloading $\alpha=\frac{(b+c)^2}{d}=\frac{b^2}{k^2}\cdot d$ field elements from any $k$ nodes. 
\end{theorem}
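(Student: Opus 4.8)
The plan is to follow the proof of Theorem~\ref{theorem:MBRreconstruction} almost verbatim, using the subspace machinery of Subsection~\ref{section:EverydConstruction} in place of the $\Theta$/Vandermonde argument to establish the one nontrivial fact, namely that a certain $b\times b$ matrix is invertible. Let $K=\{j_1,\ldots,j_k\}\subseteq[n]$ and suppose the data collector downloads $(\cM_{j_i}\vert\cN_{j_i})X$ from each node $j_i\in K$. Writing $\cM_K$ and $\cN_K$ for the vertical stacks of $\cM_{j_1},\ldots,\cM_{j_k}$ and of $\cN_{j_1},\ldots,\cN_{j_k}$ respectively, so that $\cM_K\in\bF_q^{b\times b}$ and $\cN_K\in\bF_q^{b\times c}$, the collector obtains
\[
(\cM_K\vert\cN_K)\begin{pmatrix}S&T\\T^\top&0\end{pmatrix}=\begin{pmatrix}\cM_KS+\cN_KT^\top & \cM_KT\end{pmatrix}.
\]

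The first step is to show that $\cM_K$ is invertible. By Lemma~\ref{lemma:technical}, the row span of the $\frac{b}{k}\times(b+c)$ matrix $(\cM_{j_i}\vert\bold{0})$ is exactly $V'_{j_i}$; since the appended block is zero, the row span of $\cM_{j_i}$ itself is a $\frac{b}{k}$-dimensional subspace of $\bF_q^b$ carrying the same information as $V'_{j_i}$, and the row span of $\cM_K$ is the sum of these $k$ subspaces. By Lemma~\ref{lemma:VeverykdIndependent}, Part~\ref{itm:Every-kSubspaces}, the set $\{V'_i\}_{i=1}^n$ is every-$k$ independent, so this sum is direct and has dimension $k\cdot\frac{b}{k}=b$. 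Hence $\cM_K$ is a square matrix of full row rank, i.e., invertible.

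The remaining steps are identical to those of Theorem~\ref{theorem:MBRreconstruction}. If $d>k$, so that $c>0$, the data collector extracts the $c$ rightmost columns $\cM_KT$ and left-multiplies by $\cM_K^{-1}$ to recover $T$; it then subtracts $\cN_KT^\top$ from the $b$ leftmost columns to obtain $\cM_KS$, and left-multiplies by $\cM_K^{-1}$ to recover $S$. Together $S$ and $T$ contain all $B$ symbols of $x$. If $d=k$ then $c=0$, $X=S$, and a single multiplication by $\cM_K^{-1}$ recovers $x$. Finally, $(\cM_i\vert\cN_i)X$ has $\frac{b}{k}(b+c)=\frac{b^2}{k^2}d$ entries, which gives the stated download cost $\alpha$.

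I expect no serious obstacle: the computation is a transcription of Theorem~\ref{theorem:MBRreconstruction}, and the only genuinely new point is the passage from every-$k$ independence of the subspaces $\{V'_i\}$ to invertibility of the honest $b\times b$ matrix $\cM_K$. The subtlety to be careful about is that each $V'_i$ a priori lives in $\bF_q^{b+c}$ rather than in $\bF_q^b$; this is precisely what Lemma~\ref{lemma:technical} resolves, by showing that each $V'_i$ is supported on the first $b$ of the $\cV\cU$-coordinates, so that discarding the last $c$ coordinates — equivalently, passing from $(\cM_i\vert\bold{0})$ to $\cM_i$ — preserves both dimension and independence.
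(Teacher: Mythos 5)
Your proposal is correct and follows essentially the same route as the paper: both establish that $\cM_K$ is invertible by invoking Lemma~\ref{lemma:technical} to identify the row span of $(\cM_{j_i}\mid\bold{0})$ with $V'_{j_i}$, then apply Lemma~\ref{lemma:VeverykdIndependent}, Part~\ref{itm:Every-kSubspaces} to conclude the stack has rank $b$, and finish by recovering $T$ from the rightmost $c$ columns and then $S$. Your explicit remark on why passing from $(\cM_i\mid\bold 0)$ to $\cM_i$ preserves rank is a nice clarification but not a departure from the paper's argument.
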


\begin{proof}
	Let $K=\{j_1,\ldots,j_k\}$ be a subset of $[n]$ of size $k$. By downloading the entire content $(\cM_{j_i}\vert \cN_{j_i})X$ from node $j_i$ for each $j_i\in J$, the data collector obtains
	\begin{align*}
	\begin{pmatrix}
	\cM_{j_1}S+\cN_{j_1}T^\top & \cM_{j_1} T\\
	\cM_{j_1}S+\cN_{j_2}T^\top & \cM_{j_2} T\\
	\vdots & \vdots \\
	\cM_{j_k}S+\cN_{j_k}T^\top & \cM_{j_k} T\\
	\end{pmatrix}\triangleq
	\begin{pmatrix}
	\cM_{j_1}S+\cN_{j_1}T^\top &\multirow{4}{*}{$\cM_KT$}\\
	\cM_{j_1}S+\cN_{j_2}T^\top & ~\\
	\vdots & ~\\
	\cM_{j_k}S+\cN_{j_k}T^\top & ~\\
	\end{pmatrix}.
	\end{align*}
	
	Recall that according to Lemma~\ref{lemma:technical}, we have that $\left<(\cM_{j_i}\vert \textbf{0})\right>=V_{j_i}'$. Since $\{V_{j_i}'\}_{i=1}^k$ is an independent set by Lemma~\ref{lemma:VeverykdIndependent}, Part~\ref{itm:Every-kSubspaces}, it follows that $\dim(\sum_{i=1}^{k}V_{j_i}' )=\sum_{i=1}^{k}\dim(V_{j_i}')=\frac{b}{k}\cdot k =b$, and hence, 
	\begin{align*}
	\rank 
	\begin{pmatrix}
	\cM_{j_1} & \bold{0}\\
	\cM_{j_1} & \bold{0}\\
	\vdots  & \vdots  \\
	\cM_{j_k} & \bold{0}\\
	\end{pmatrix}=b
	\end{align*}
	which implies that $\cM_K\in \bF_q^{b\times b}$ is invertible. Therefore, the data collector may extract the rightmost $c$ columns of his data, multiply by $\cM_K^{-1}$, and obtain $T$. Having $T$, the data collector may similarly obtain $S$ as well.
\end{proof}

Since $\beta=\frac{(b+c)^2}{d^2}$, $\alpha=\frac{(b+c)^2}{d}$, we have that $\beta d=\alpha$, and hence this code enables minimum bandwidth repair. Notice also that rate $\frac{B}{\alpha n}$ equals $\frac{k^2}{dn}\cdot(\frac{1}{2b}+\frac{d}{k}-\frac{1}{2})\approx\frac{2k-k^2/d}{2n}$. Hence, this construction is identical in parameters to the one given in Section~\ref{section:MBRconstruction}, and the analysis in Section~\ref{section:MBRasymptotic} applies to it as well. Yet, the subspace interpretation which is given in this section could possibly be implemented with alternative constructions of every-$d$ independent sets, which may be of independent mathematical interest.

\subsection{Vandermonde matrix structure}\label{section:SubspaceVandermonde}
The repair algorithm in the proof of Theorem~\ref{theorem:d>kRepair} relies on inverting a matrix of the form $(\cM\vert \cN)_{D}$ for some $d$-subset~$D$ of $[n]$, and the reconstruction algorithm in the proof of Theorem~\ref{theorem:d>kReconstruction} relies on inverting a matrix of the form $\cM_K$ for some $k$-subset $K$ of~$[n]$. These matrix posses a hidden Vandermonde structure which may be utilized for the inversion process by either an explicit formula~\cite[Sec.~1.2.3, Ex.~40]{Knuth} or a Reed-Solomon decoder~\cite[Ch.~6]{Ronny'sBook}. The Vandermonde structure of $\cM_K$ is presented, and the one of $(\cM\vert \cN)_{D}$ may be obtained similarly.

According to~\eqref{eqn:Mi}, for $K=\{j_1,\ldots,j_k\}$ the $b\times b$ matrix $\cM_K$ over $\bF_q$ is of the form

\begin{align}\label{eqn:MJ}
\cM_K=
\begin{pmatrix}
\begin{array}{c|c|c|c}
\Psi(u_1)        &\Psi(\gamma_{j_1} u_1)        &\cdots&\Psi(\gamma_{j_1}^{k-1}u_1)\\
\Psi(u_2)        &\Psi(\gamma_{j_1} u_2)        &\cdots&\Psi(\gamma_{j_1}^{k-1}u_2)\\
\vdots   	     &\vdots                    &\ddots&\vdots \\
\Psi(u_{(b+c)/d})&\Psi(\gamma_{j_1} u_{(b+c)/d})&\cdots&\Psi(\gamma_{j_1}^{k-1}u_{(b+c)/d})\\ \hline
\Psi(u_1)        &\Psi(\gamma_{j_2} u_1)        &\cdots&\Psi(\gamma_{j_2}^{k-1}u_1)\\
\Psi(u_2)        &\Psi(\gamma_{j_2} u_2)        &\cdots&\Psi(\gamma_{j_2}^{k-1}u_2)\\
\vdots   	     &\vdots                    &\ddots&\vdots \\
\Psi(u_{(b+c)/d})&\Psi(\gamma_{j_2} u_{(b+c)/d})&\cdots&\Psi(\gamma_{j_2}^{k-1}u_{(b+c)/d})\\ \hline
\vdots & \vdots & \vdots & \vdots \\ \hline
\Psi(u_1)        &\Psi(\gamma_{j_k} u_1)        &\cdots&\Psi(\gamma_{j_k}^{k-1}u_1)\\
\Psi(u_2)        &\Psi(\gamma_{j_k} u_2)        &\cdots&\Psi(\gamma_{j_k}^{k-1}u_2)\\
\vdots   	     &\vdots                    &\ddots&\vdots \\
\Psi(u_{(b+c)/d})&\Psi(\gamma_{j_k} u_{(b+c)/d})&\cdots&\Psi(\gamma_{j_k}^{k-1}u_{(b+c)/d})\\
\end{array}
\end{pmatrix},
\end{align}
where the function~$\Psi$ maps an element in~$\bF_{q^{b/k}}$ to its vector representation in~$\bF_q^{b/k}$ using the basis $\cU=\{u_1,\ldots,u_{b/k}\}$. Hence, by inverting~$\Psi$, the matrix $\cM_K$, which is a ${b\times b}$ matrix over~$\bF_q$, can be considered as a $b\times k$ matrix over $\bF_{q^{b/k}}$. In addition, for $i\in[\frac{b}{k}]$, if $M_{J_i}$ is the $k\times k$ matrix over $\bF_{q^{b/k}}$ which consists of all the $i$-th rows from each row-block in~\eqref{eqn:MJ}, then
\begin{align*}
\cM_{J_i}=u_i\cdot
\begin{pmatrix}
1 & \gamma_{j_1} & \gamma_{j_1}^2 & \cdots & \gamma_{j_1}^{k-1}\\
1 & \gamma_{j_2} & \gamma_{j_2}^2 & \cdots & \gamma_{j_2}^{k-1}\\
\vdots & \vdots & \vdots & \ddots & \vdots \\
1 & \gamma_{j_k} & \gamma_{j_k}^2 & \cdots & \gamma_{j_k}^{k-1}\\
\end{pmatrix}.
\end{align*}

Therefore, the matrix $\cM_K$ can be seen as a row-interleaving of $\frac{b}{k}$ constant multiples of a Vandermonde matrix over $\bF_{q^{b/k}}$.
\fi
\section{Discussion and future research}\label{section:Discussion}
In this paper, asymptotically optimal regenerating codes were introduced. These codes attain the cut-set bound asymptotically as the reconstruction degree~$k$ increases, and may be defined over any field if the file size is reasonably large. Further, these codes enjoy several properties which are inherited from product matrix codes, such as the fact that helper nodes do not need to know the identity of each other\footnote{Although this property is apparent in most regenerating codes constructions, some constructions do require otherwise, such as~\cite{DeterminantCoding}, and some of the work of~\cite{Us1}.}, and the ability to add an extra storage node without encoding the file anew.

It is evident from Table~\ref{table:SpecificParamters} and Table~\ref{table:SpecificParamtersMSR} that for~$q=2$, a small loss of code rate is apparent already for feasible values of~$k$, and clearly, similar results hold for larger~$q$ as well. Since large finite field arithmetics is often infeasible, our results contribute to the feasibility of storage codes.

The research of storage codes has gained a considerable amount of attention lately. In particular, the results of~\cite{PMcodes}, which inspired ours, was expanded and improved in few recent papers. For example,~\cite{DeterminantCoding} generalized the PM-MBR construction to achieve other points of the trade-off through minor matrices, and~\cite{BandwidthAdaptive} presented an MBR code which supports an arbitrary number of helper nodes in the repair process. Among the research directions we currently pursue are the application of the techniques from the current paper to the aforementioned works, as well as to high rate MSR constructions, and analyzing the encoding, decoding, repair, and reconstruction complexities of our codes in comparison with PM codes.

\section*{Acknowledgments}
The work of Netanel Raviv was supported in part by the Israeli Science Foundation (ISF), Jerusalem, Israel, under  Grant  no.~10/12, The IBM Ph.D. fellowship, and the Mitacs organization, under the Globalink Israel-Canada Innovation Initiative. The author would like to express his sincere gratitude to Prof.~Frank Kschischang, Prof.~Tuvi Etzion, and Prof. Itzhak Tamo for many insightful discussions.

\ifdefined\subspaces
\section*{Appendix A - Systematic encoding}
Similar to~\cite[Sec.~IV.B]{PMcodes}, the codes presented in Section~\ref{section:MBR} and Section~\ref{section:SubspaceCodes} have a systematic form. This form can be directly obtained by using a Cauchy matrix, without the need to apply an invertible linear transform on the encoding matrix. However, using this Cauchy matrix requires a modification of Condition~\ref{itm:A1}, as explained in the sequel.

Recall that the encoding matrix~$M$~\eqref{eqn:DataMatrix} has a block-Vandermonde structure. Alternatively, it can be obtained from a $k\times n$ Vandermonde matrix over~$\bF_{q^{b/k}}$ and applying the representation of $\bF_{q^{b/k}}$ as powers of a companion matrix (Lemma~\ref{lemma:ExtensionFieldRepresentation}). The essential property of the matrix~$M$ which is used throughout the paper is the fact that any~$d\times d$ matrix, and every $k\times k$ submatrix which consists of rows~$1$ through~$k$ is invertible. Hence, we may alternatively consider
\begin{align*}
A''\triangleq 
\begin{pmatrix}
I_k & E \\
0   & F
\end{pmatrix},
\end{align*}
where ${E\choose F}$ is a $d\times(n-k)$ Cauchy matrix over~$\bF_{q^{b/k}}$, and $I_k$ is the $k\times k$ identity matrix. Since every square submatrix of a Cauchy matrix is invertible, applying the matrix representation to~$A''$ will result in a \textit{systematic} version of the code. However, to have a $d\times (n-k)$ Cauchy matrix over $\bF_{q^{b/k}}$ one must update Condition~\ref{itm:A1} to $b\ge \log_q(n+d-k)\cdot k$.

Similarly, in Section~\ref{section:SubspaceCodes} we have that the crucial property of the matrix $A'$ (resp., the matrix~$A$) in~\eqref{eqn:MatrixA'} is the fact that any set of $d$ (resp.~$k$) columns in it are linearly independent. Hence, a nearly equivalent code may be obtained by considering the above matrix~$A''\in \bF_{q^{b/k}}^{d\times n}$. Further, the encoding matrices $M_i$ and $N_i$, given in~\eqref{eqn:Mi} and~\eqref{eqn:Ni}, are replaced by

\begin{align*}
\cM_i\triangleq 
\begin{pmatrix}
\begin{array}{c|c|c|c}
\Psi(a_{1,i}u_1)        &\Psi(a_{2,i} u_1)        &\cdots&\Psi(a_{k,i} u_1)\\
\Psi(a_{1,i}u_2)        &\Psi(a_{2,i} u_2)        &\cdots&\Psi(a_{k,i} u_2)\\
\vdots   	     			  &\vdots                   &\ddots&\vdots \\
\Psi(a_{1,i}u_{(b+c)/d})&\Psi(a_{2,i} u_{(b+c)/d})&\cdots&\Psi(a_{k,i}u_{(b+c)/d})
\end{array}
\end{pmatrix}
\end{align*}

\begin{align*}
\cN_i\triangleq 
\begin{pmatrix}
\begin{array}{c|c|c}
\Psi(a_{k+1,i} u_1)        &\cdots&\Psi(a_{d,i} u_1)\\
\Psi(a_{k+1,i} u_2)        &\cdots&\Psi(a_{d,i} u_2)\\
\vdots                     &\ddots&\vdots\\
\Psi(a_{k+1,i} u_{(b+c)/d})&\cdots&\Psi(a_{d,i} u_{(b+c)/d})
\end{array}
\end{pmatrix},
\end{align*}
where $a_{i,j}$ denotes the $(i,j)$-th entry of~$A''$. Consequently, we have that
\begin{align*}
\begin{pmatrix}
\cM_1 \\ \cM_2 \\ \vdots \\ \cM_k
\end{pmatrix} = I_b
\end{align*}
and $\cN_1=\cN_2=\ldots =\cN_k=0$, which implies that nodes $1$ through $k$ together hold the information $(S~T)$. Hence, by using $A''$ instead of $A'$, a systematic version of the code is obtained. Notice that Condition~\ref{itm:A1} is to be replaced by $b\ge \log_q (n+d-k)\cdot k$ as well.

\section*{Appendix B - Omitted proofs}
\begin{theorem}\label{theorem:MDSEquivalence}
	For any integers $b$ and $k$ such that $b|k$ and for any prime power $q$, if there exists an every-$k$ independent set $\{U_i\}_{i=1}^n\subseteq \grsmn{q}{b}{\frac{b}{k}}$, then there exists an $[n,k]$ MDS code over $\bF_{q^{b/k}}$ which is linear over~$\bF_q$.
\end{theorem}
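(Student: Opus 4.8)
The plan is to realize the desired code as the image of an explicit $\bF_q$-linear encoding map built from bases of the subspaces $U_1,\ldots,U_n$. Write $m\triangleq b/k$, an integer since $k\mid b$, so that each $U_i\in\grsmn{q}{b}{m}$. First I would fix, for every $i\in[n]$, a matrix $B_i\in\bF_q^{b\times m}$ whose $m$ columns form an $\bF_q$-basis of $U_i$. The crucial observation is a reformulation of the hypothesis: for a $k$-subset $T=\{t_1,\ldots,t_k\}\sus[n]$, the columns of the concatenation $A_T\triangleq\begin{pmatrix}B_{t_1}&\cdots&B_{t_k}\end{pmatrix}\in\bF_q^{b\times km}=\bF_q^{b\times b}$ span $\sum_{t\in T}U_t$, so $A_T$ is invertible if and only if $\dim\!\big(\sum_{t\in T}U_t\big)=km=\sum_{t\in T}\dim U_t$, i.e.\ if and only if $\{U_t\}_{t\in T}$ is independent. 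Hence $\{U_i\}_{i=1}^n$ being every-$k$ independent is equivalent to: every such $A_T$ is invertible.

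Next, after fixing an $\bF_q$-basis of $\bF_{q^{b/k}}$ to identify $\bF_q^m\cong\bF_{q^{b/k}}$ as $\bF_q$-spaces, I would set
\[
\cC\triangleq\Big\{\,\big(B_1^\top y,\ B_2^\top y,\ \ldots,\ B_n^\top y\big)\ \Big|\ y\in\bF_q^b\,\Big\}\ \sus\ (\bF_q^m)^n\ \cong\ \bF_{q^{b/k}}^n .
\]
By construction $\cC$ is $\bF_q$-linear and has length $n$. Its $\bF_q$-dimension equals the rank of the map $y\mapsto(B_i^\top y)_{i=1}^n$, whose kernel consists of the vectors $y\in\bF_q^b$ orthogonal (for the standard dot product) to every column of every $B_i$, hence to $\sum_{i=1}^nU_i$; since $\sum_{i=1}^nU_i\supseteq\sum_{t\in T}U_t=\bF_q^b$ for any $k$-subset $T$, this kernel is $\{0\}$, and therefore $|\cC|=q^b=(q^{b/k})^k$, the cardinality of an $[n,k]$ code over $\bF_{q^{b/k}}$.

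It remains to check the MDS property. For a $k$-subset $T\sus[n]$, the projection $\pi_T$ of $\cC$ onto the coordinates indexed by $T$ sends the codeword of $y$ to $(B_t^\top y)_{t\in T}$, which as a function of $y$ is $y\mapsto A_T^\top y$; since $A_T$, and hence $A_T^\top$, is invertible, $\pi_T$ restricts to a bijection $\cC\to\bF_{q^{b/k}}^k$. Thus any $k$ coordinates of a codeword determine it, so two distinct codewords of $\cC$ agree in at most $k-1$ positions, giving minimum Hamming distance at least $n-k+1$; by the Singleton bound this is an equality, so $\cC$ is MDS. The one point that deserves care is conceptual rather than computational: $\cC$ is only asserted to be $\bF_q$-linear (not $\bF_{q^{b/k}}$-linear), so ``$[n,k]$ MDS'' must be understood as ``$|\cC|=(q^{b/k})^k$ with minimum distance $n-k+1$'', and the Singleton bound is invoked in its form for arbitrary codes; with that understood, every remaining step is a routine dimension count.
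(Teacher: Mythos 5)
Your proof is correct and takes essentially the same route as the paper's: both realize the code as the $\bF_q$-linear image of the map that projects $y\in\bF_q^b$ onto the subspaces $U_i$ (the paper's row matrix $M_i$ is your $B_i^\top$), and both derive the MDS property from the invertibility of the $b\times b$ block matrix formed from any $k$ of these projections, which is exactly the every-$k$ independence hypothesis. The only difference is cosmetic: you are a bit more explicit about the dimension count and about invoking the Singleton bound for a code that is merely $\bF_q$-linear, whereas the paper leaves these routine points implicit.
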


\begin{proof}
	The required MDS code is defined by a function $f:\bF_{q^{b/k}}^k\to\bF_{q^{b/k}}^n$ such that for any $c\in\image{f}$, where~$\image{f}$ is the image of $f$, the preimage $f^{-1}(c)$ can be computed given any $k$ symbols of $c$. For each $i\in[n]$ let $M_i\in\bF_q^{(b/k)\times b}$ be a matrix such that $\left<M_i\right>=U_i$, fix any basis $\cU\triangleq \{u_1,\ldots,u_{b/k}\}$ of $\bF_{q^{b/k}}$ over $\bF_q$, and use~$\cU$ to consider a given $x\in\bF_{q^{b/k}}^k$ as a vector in $\bF_q^b$. For each $j\in[n]$ define $f(x)_j$ as $xM_j^\top$, where $\cU$ is once again used to represent a vector in $\bF_q^{b/k}$ as an element in $\bF_{q^{b/k}}$, that is, $f(x)_j=\sum_{i=1}^{b/k}(xM_j^\top)_iu_i$.
	
	To prove that $\image f$ is an MDS code, let $J\triangleq \{j_1,\ldots,j_k\}$ be a subset of $[n]$ of size $k$, and assume that $c_{j_1},\ldots,c_{j_k}$ are given for some $c\in\image{f}$. Since $\sum_{i=1}^{k}U_{j_i}=\bF_q^b$, and since $c_{j_i}=xM_{j_i}^\top$ for some $x\in\bF_q^b$ and for all $i\in[k]$, it follows that the matrix $\left(M_{j_1}^\top\vert M_{j_2}^\top\vert\cdots\vert M_{j_k}^\top\right)$ is invertible, and thus $x$ can be computed from $c_{j_1},\ldots,c_{j_k}$.
	
	To prove that $\image f$ is linear over $\bF_q$, for $x=(x_1,\ldots,x_k)\in\bF_{q^{b/k}}^k$ denote its representation in $\bF_q^b$ as $x=(x_{1,1},x_{1,2},\ldots,x_{1,b/k},x_{2,1},\ldots,x_{k,b/k})$. For all $x,y\in\bF_{q^{b/k}}^k$ and $j\in [n]$ we have that
	\begin{align*}
	f(x+y)_j=\sum_{i=1}^{b/k}((x+y)M_j^\top)_iu_i=\sum_{i=1}^{b/k}(xM_j^\top)_iu_i+\sum_{i=1}^{b/k}(yM_j^\top)_iu_i=f(x)_j+f(y)_j,
	\end{align*}
	and hence $f(x+y)=f(x)+f(y)$. Since for $a\in\bF_q$ we have that $f(ax)_j=\sum_{i=1}^{b/k}(axM_j^\top)_iu_i=af(x)$, the claim follows.
\end{proof}

\begin{lemma}\label{lemma:BasisVUproof}
	The set $\cV \cU$ from~\eqref{eqn:BasisVU} is a basis of $\bF_{q^{b+c}}$ over $\bF_q$.
\end{lemma}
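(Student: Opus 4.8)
The plan is to invoke the standard tower-of-bases argument for the chain $\bF_q \subseteq \bF_{q^{(b+c)/d}} \subseteq \bF_{q^{b+c}}$, whose intermediate inclusion is legitimate precisely because $d\mid b+c$ (as recorded just before the statement, where $\tfrac{b}{k}=\tfrac{b+c}{d}$). First I would note the cardinality: the set $\cV\cU$ has at most $d\cdot\tfrac{b+c}{d}=b+c$ elements, and $b+c=[\bF_{q^{b+c}}:\bF_q]=\dim_{\bF_q}\bF_{q^{b+c}}$. Consequently it suffices to show that $\cV\cU$ spans $\bF_{q^{b+c}}$ over $\bF_q$; the basis property, including the fact that the products $v_iu_j$ are pairwise distinct, then follows automatically.

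For spanning, I would take an arbitrary $w\in\bF_{q^{b+c}}$ and expand it twice. Since $\cV$ is a basis of $\bF_{q^{b+c}}$ over $\bF_{q^{(b+c)/d}}$, write $w=\sum_{i=1}^{d}c_i v_i$ with each $c_i\in\bF_{q^{(b+c)/d}}$; then, since $\cU$ is a basis of $\bF_{q^{(b+c)/d}}$ over $\bF_q$, expand $c_i=\sum_{j=1}^{(b+c)/d}a_{i,j}u_j$ with $a_{i,j}\in\bF_q$. Substituting and distributing gives $w=\sum_{i,j}a_{i,j}\,v_iu_j$, exhibiting $w$ as an $\bF_q$-linear combination of $\cV\cU$.

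An equivalent and perhaps cleaner route is to prove linear independence directly: assuming $\sum_{i,j}a_{i,j}v_iu_j=0$ with $a_{i,j}\in\bF_q$, regroup as $\sum_{i=1}^{d}\bigl(\sum_{j}a_{i,j}u_j\bigr)v_i=0$; the inner sums lie in $\bF_{q^{(b+c)/d}}$, so independence of $\cV$ over that field forces $\sum_j a_{i,j}u_j=0$ for each $i$, and independence of $\cU$ over $\bF_q$ then forces every $a_{i,j}=0$. Either way, combining with the count $|\cV\cU|\le b+c=\dim_{\bF_q}\bF_{q^{b+c}}$ closes the proof. I do not anticipate a genuine obstacle; the only points that warrant care are confirming the subfield relation $\bF_{q^{(b+c)/d}}\subseteq\bF_{q^{b+c}}$ (valid since $d\mid b+c$) and checking that the element count matches the degree so that ``spanning'' upgrades to ``basis.''
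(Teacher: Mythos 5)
Your second route (linear independence: regroup as $\sum_i v_i(\sum_j a_{i,j}u_j)=0$, then peel off $\cV$-independence over $\bF_{q^{(b+c)/d}}$ and $\cU$-independence over $\bF_q$) is exactly the paper's argument, and your first route is the other half of the same standard tower-of-bases lemma. The proposal is correct and essentially matches the paper; you are merely a bit more explicit than the paper about the cardinality count $|\cV\cU|\le b+c=\dim_{\bF_q}\bF_{q^{b+c}}$ that upgrades independence (or spanning) to a basis.
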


\begin{proof}
	Assume that
	\begin{align*}
	\sum_{i=1}^{d}\sum_{j=1}^{\frac{b+c}{d}}w_{i,j}v_iu_j=0
	\end{align*}
	for some~$w_{i,j}$ in $\bF_q$, and notice that
	\begin{align}\label{eqn:VUproof1}
	\sum_{i=1}^{d}\sum_{j=1}^{\frac{b+c}{d}}w_{i,j}v_iu_j=\sum_{i=1}^{d}v_i\sum_{j=1}^{\frac{b+c}{d}}w_{i,j}u_j=0.
	\end{align}
	
	Since the elements of the basis~$\cU$ are in~$\bF_{q^{(b+c)/d}}$ and since $w_{i,j}$ are in~$\bF_q$ for all~$i$ and~$j$, it follows that $\sum_{j=1}^{(b+c)/d}w_{i,j}u_j$ is an element of $\bF_{q^{(b+c)/d}}$ for all~$i$. Therefore, since~$\cV$ is a basis of~$\bF_{q^{b+c}}$ over~$\bF_{q^{(b+c)/d}}$,~\eqref{eqn:VUproof1} implies that
	\begin{align}\label{eqn:VUproof2}
	\sum_{j=1}^{\frac{b+c}{d}}w_{i,j}u_j=0
	\end{align}
	for all~$i\in[d]$. Further, since~$\cU$ is a basis of~$\bF_{q^{(b+c)/d}}$ over~$\bF_q$, and since~$w_{i,j}\in\bF_q$,~\eqref{eqn:VUproof2} implies that~$w_{i,j}=0$ for all~$i\in[d]$ and for all~$j\in[\frac{b+c}{d}]$.
	\end{proof}
\fi
\end{document}